\def\compileForArxiv{}
	\newcommand{\arxiv}[1]{#1}
	\newcommand{\camready}[1]{}
	\newcommand{\arxiv}[1]{}
	\newcommand{\camready}[1]{#1}
\newenvironment{tikzreduction}[3][f]{
 \begin{tikzpicture}
   \def\vis{#1}
   ;
   \def\c{cells};
   
   \def\w{#2}
   \def\h{#3}
   
      \tikzset{whitenode/.style={
    circle,fill=white,inner sep=2pt,draw=black
  }
 }
    \tikzset{blacknode/.style={
    circle,fill=black,inner sep=2pt
  }
 }
   
   \tikzset{move/.style={
    ->,bend angle=90, bend right=30, red, shorten >=.2ex, shorten <=.2ex
  }
 }
   \tikzset{moveA/.style={
    ->,bend angle=90, bend right=30, blue, shorten >=.2ex, shorten <=.2ex
  }
 }
 \tikzset{moveB/.style={
    ->,bend angle=90, bend right=30, orange, shorten >=.2ex, shorten <=.2ex
  }
 }
  \tikzset{moveC/.style={
    ->,bend angle=90, bend right=30, green, shorten >=.2ex, shorten <=.2ex
  }
 }
 \ifx\vis\c
    \foreach \x in {0,...,\h} {
      \draw [gray,dashed]($(-0.5,-0.5)+(0,\x)$)node[](){} -- ($(-0.5,-0.5)+(\w,\x)$)node[](){};
    };
    \foreach \y in {0,...,\w} {
      \draw [gray,dashed]($(-0.5,-0.5)+(\y,0)$)node[](){} -- ($(-0.5,-0.5)+(\y,\h)$)node[](){};
    };
 \else
    \foreach \x in {0,...,\h} {
      \draw [gray,dashed]($(0,\x)$)node[](){} -- ($(\w,\x)$)node[](){};
    };
    \foreach \y in {0,...,\w} {
      \draw [gray,dashed]($(\y,0)$)node[](){} -- ($(\y,\h)$)node[](){};
    }; 
 \fi
}{\end{tikzpicture}}
\newenvironment{sketchofproof}{\begin{proof}[sketch]}{\qed\end{proof}}
\newcommand{\NP}{\ensuremath{\mathcal{N\!P}}}
\newcommand{\APX}{\ensuremath{\mathcal{AP\!X}}}
\newcommand{\lang}[1]{\textsc{#1}}
\begin{document}
\title{Snapping Graph Drawings to the Grid Optimally}

\author{Andre L\"offler \and Thomas C. van Dijk \and Alexander Wolff}

\authorrunning{A.~L\"offler et al.}

\tocauthor{A.~L\"offler, T.~C.~van~Dijk, A.~Wolff}

\institute{%
    Lehrstuhl f\"ur Informatik I, Universit\"at W\"urzburg, Germany.\\
    \email{http://www1.informatik.uni-wuerzburg.de/en/staff}
}
\maketitle

\begin{abstract}
  In geographic information systems and in the production of digital
  maps for small devices with restricted computational resources one
  often wants to round coordinates to a rougher grid.  This removes
  unnecessary detail and reduces space consumption as well as
  computation time.  This process is called \emph{snapping to the grid} 
  and has been investigated thoroughly from a computational-geometry
  perspective.  In this paper we investigate the same problem for
  given drawings of planar graphs under the restriction that their
  combinatorial embedding must be kept and edges are drawn
  straight-line.  We show that the problem is NP-hard for several
  objectives and provide an integer linear programming formulation.
  Given a plane graph~$G$ and a positive integer~$w$, our ILP
  can also be used to draw $G$ straight-line on a grid of width~$w$
  and minimum height (if possible).
\end{abstract}

\section{Introduction}\label{sec:intro}

When compressing geographic data, for example in order to ship it to
devices with small memory, small screens and slow CPUs, the main
objective is to reduce unnecessary detail.  One way to do this is to
round data points to a grid.  

In the computational geometry community, a process called \emph{snap
rounding} has been proposed and has since become well-established:
given an arrangement of line segments, each grid cell that
contains vertices or intersections is ``hot''.
Then every segment becomes a polygonal chain whose edges (\emph{fragments})
connect center points of hot cells, namely those that the
original segment (\emph{ursegment})
intersects. 
Guibas and Marimont~\cite{gm-rad-IJCGA98} showed that during snap
rounding, vertices of the arrangement never cross a polygonal chain,
so after snapping no two fragments cross.
Moreover, the circular order of the fragments around an
output vertex is the same as the order in which the
corresponding ursegments intersect the boundary of its grid cell. 
The
resulting arrangement approximates the original one in the sense that
any fragment lies within the Minkowski sum of the corresponding
ursegments and a unit square centered at the origin.  
However, the structure of the graph can be affected (vertices merge, faces
disappear, edges bend). 
Further work in this direction includes that of De Berg et al.~\cite{de2007intersection}.

Motivated by the above GIS application, we investigate the problem of moving the drawing of a graph to a given grid.  
Since we still want to be able to recognize the original graph, we do not tolerate new incidences.  
Then we must accept the possibility that a vertex does not go to the nearest grid point, but we still want to minimize change.
This can by measured, for example, by the sum of the distances or the maximum distance in the Euclidean ($L_2$-) or Manhattan ($L_1$-) metric.\
Apparently, this problem, which we call \lang{Topologically-Safe
  Snapping}, has not been studied yet.  (Note that we carry over the
term ``snapping,'' although we don't necessarily snap to the
\emph{nearest} grid point.)

From a graph-drawing perspective, restricting to the grid has a (relatively) long history.  
Motivated by the fact that Tutte's barycenter method~\cite{t-hdg-PLMS63} for drawing planar graphs yields drawings that need precision linear in the size of the graph, Schnyder~\cite{s-epgg-SODA90} and, independently, de Fraysseix et al.~\cite{fraysseix1990draw} have shown that any planar graph with  $n$ vertices admits a straight-line drawing on a grid of size $O(n)\times O(n)$.  
This is asymptotically optimal in the worst case~\cite{fraysseix1990draw}. 
Chrobak and Nakano~\cite{chrobak1998minimum} have investigated drawing planar graphs on grids of smaller width, at the expense of a larger height.
Grid-snapping techniques can be found in any diagram creation tool.
Aesthetic properties of force-directed drawing algorithms are widely
researched, see e.g. Kieffer et al.~\cite{kieffer2013incremental} for grid layouts of diagrams. 

Although minimizing the area of straight-line grid drawings has been the topic of several graph drawing contests, 
there has been rather little previous work.  
It is known that the problem is \NP{}-hard~\cite{kw-mapslgd-gd07}, but not even for special cases exact or approximation algorithms have been proposed.

\paragraph{Our contribution.}
We show that optimal snapping is \NP{}-hard, with a reduction that
asks for compressing each coordinate by just a single bit (Sect.~\ref{sec:hardness}).
The proof is somewhat similar in concept
to the proof of the \NP{}-hardness of Metro-Map
Layout~\cite{n-admm-MTh05,w-dsms-IFE07}, but new constructions are
required since the snapping problem does not easily allow the
construction of ``rigid'' gadgets. 
Second, we give an integer linear program (ILP) for optimal snapping
(Sect.~\ref{sec:ilp}).
This ILP generalizes the one for Metro-Map
Layout~\cite{nw-dlhqm-TVCG11}.  Where that ILP assumes a constant number of
possible edge
directions (namely~8), we have to cope with a number that is quadratic
in the size of the grid. 
The numbers of variables and constraints of our ILP are polynomial in
grid and graph size, but are quite large in practice. 
In fact, on a grid of size $k \times k$, there are $\Theta(k^2)$ edge
directions.  Thus, for an $n$-vertex planar graph, we must generate
$O(k^2n^2)$ constraints, among others, to preserve planarity and the
cyclic order of edges around the vertices.
To ameliorate this, we apply delayed constraint generation, a technique
that adds certain constraints only when needed.  Still, runtime is
prohibitive for graphs with more than about 15 vertices.
Our techniques can be adapted to draw (small) graphs with minimal area.
This is interesting even for small graphs since minimum-area drawings
can be useful for validating (counter)examples in graph drawing theory.

\section{NP-Hardness}
\label{sec:hardness}

We start with a formal definition of \lang{TopologicallySafeSnapping} -- or \lang{TSS} for short. 
To measure the cost of rounding a graph, we utilize Manhattan distance
and the total cost of rounding a graph is the sum over the individual
costs of the vertices.
As input we take a plane graph $G=(V,E)$ with vertex positions and a
bounding box $[0, X_{\max}] \times [0,Y_{\max}]$.
The \lang{TSS} problem is then to minimize the cost of rounding the
vertices of~$G$ to the integer grid within the box without altering
the topology with respect to the given plane straight-line drawing of~$G$.

We prove \NP{}-hardness of \lang{TSS} by considering the decision variant: 
is there a rounding that does not exceed a given cost bound $c$? 
We reduce from \lang{Planar monotone 3-SAT} (which is
\NP{}-hard~\cite{de2012optimal}): given a formula $F$ in 3-CNF that
is monotone and whose graph $H(F)$ is planar, is $F$ satisfiable?
The graph $H(F)$ has a vertex for each variable and each clause of~$F$
and an edge between a variable vertex~$v_X$ and a clause vertex~$v_C$
if $X$ is part of~$C$.
We will only consider formulae whose graphs are planar and that are
\emph{monotone} in the usual sense:
for any clause~$C$, variables in~$C$ either are all negated or all unnegated.
We can assume that the graph $H(F)$ can be laid out as in
Fig.~\ref{figure:graph-line-bend-variable-clause}$\,$(a): all variable
vertices lie on the x-axis, the vertices of all-negated clauses
lie above the x-axis, and the vertices of all-unnegated clauses lie
below the x-axis~\cite{de2012optimal}.

\begin{theorem}
\lang{TopologicallySafeSnapping} is \NP{}-hard.
\end{theorem}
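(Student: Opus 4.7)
The plan is to reduce from \lang{Planar monotone 3-SAT}. Given an instance~$F$ with its axis-aligned drawing of $H(F)$ from Fig.~\ref{figure:graph-line-bend-variable-clause}(a), I would construct in polynomial time a plane straight-line drawing~$G_F$ in a suitable bounding box and a cost bound~$c$ such that $G_F$ admits a topologically safe snapping of cost at most~$c$ if and only if $F$ is satisfiable. To realize the ``single bit per coordinate'' compression advertised in the introduction, I would place every vertex of~$G_F$ at a half-integer position so that it has only a small constant number of nearby grid points to snap to; each such local choice then encodes a Boolean value.

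The reduction uses four gadget types. A \emph{variable gadget} is a small configuration whose only two cheap snappings correspond to the two truth values of a variable. A \emph{wire} is a chain of closely spaced half-integer vertices whose edges, combined with the no-new-incidences rule and the preservation of the combinatorial embedding, force the Boolean value chosen at one end to propagate to the other. A \emph{bend} achieves the same while turning a wire by $90^\circ$, which is needed to route wires from the variable row up to positive clauses and down to negative clauses; monotonicity keeps these two sides separate, so the whole graph can be drawn planarly. A \emph{clause gadget} sits where three wires meet and is designed so that some snapping of its central vertex stays within the budget precisely when at least one of the three incoming wires carries the ``true'' value.

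The main difficulty, as the authors already flag, is that snapping does not admit truly rigid gadgets: one cannot force a vertex to a particular grid point, only penalize the ``wrong'' choices. To cope, I would assign every vertex a unique cheapest grid point at Manhattan distance exactly $1/2$, set $c$ equal to the sum of these cheapest costs, and then prove that any deviation from the intended snap of any gadget component strictly exceeds the budget. The decisive lemma will be a local propagation claim: along each wire and through each bend, flipping the Boolean value of a single vertex either creates an edge crossing, reverses a cyclic edge order at some neighbor, or adds at least one extra unit of Manhattan cost. Designing the clause gadget so that the three incoming wires can share the slack of the central vertex exactly when at least one literal is true, without introducing topological ambiguities, is the geometric crux and the part I expect to require the most iteration.

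For the reverse direction, given a satisfying assignment $\sigma$, I would explicitly snap each variable gadget according to $\sigma$, propagate the resulting value through every wire and bend in the cheap way, and snap each clause gadget using any literal that $\sigma$ makes true; a direct check confirms that the total cost equals~$c$ and that no edge crossings or cyclic-order violations are introduced. Membership in \NP{} is immediate because any candidate snapping can be verified in polynomial time. Combining the two directions then yields \NP{}-hardness of \lang{TSS}.
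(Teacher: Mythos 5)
Your high-level plan coincides with the paper's: reduce from \lang{Planar monotone 3-SAT}, build variable, wire (line), bend and clause gadgets along the standard rectilinear layout of $H(F)$, let the ``no two vertices on the same grid point'' rule propagate a binary choice along each wire, and let the clause gadget admit a cheap snap iff one incoming wire carries \emph{true}. However, your cost design contains a genuine flaw. You propose to give \emph{every} vertex a \emph{unique} cheapest grid point at Manhattan distance exactly $1/2$ and to set $c$ equal to the sum of these cheapest costs, proving that ``any deviation from the intended snap'' exceeds the budget. If each vertex has a unique cheapest target and the budget equals the sum of the cheapest costs, then the only snapping meeting the budget is the all-nearest-point snapping; there is no residual freedom left to encode a truth assignment, and the decision problem degenerates to checking whether that single candidate snapping is topologically valid --- a polynomial-time test. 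The Boolean choice you want each gadget to make must be a choice among \emph{several equally cheap} targets, constrained only by topology, not by cost. (Your text is also internally inconsistent on this point: a vertex at a half-integer position does not have a unique nearest grid point.)

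The paper resolves exactly this tension by using two vertex classes. \emph{Black} vertices already sit on grid points (cost $0$ to stay put) and form the tunnel walls of the wires, bends, variable and clause gadgets; \emph{white} vertices sit at grid-cell centers, so all four surrounding grid points cost Manhattan distance $1$, and the black walls make exactly two of them topologically available. The budget is $c_{\min}=|W|$, which is met by exponentially many snappings; which of the two targets each white vertex picks is the propagated bit, and the push mechanism (two white neighbours cannot land on the same grid point) carries it along the tunnel into the clause gadget, whose satisfaction vertex has three equally cheap candidate points, one per literal. You would need to replace your uniform ``unique cheapest point'' scheme with such a two-tier construction (or an equivalent one that leaves each decision vertex with several minimum-cost targets) before the propagation lemma you describe can even be stated. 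The rest of your outline --- the forward direction from a satisfying assignment, and reading an assignment off a cost-$c$ solution --- then matches the paper's argument.
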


\begin{proof}
For a given monotone, planar 3-CNF formula $F$, we construct a cost bound $c_{\min}$ and a plane graph $G$ with vertices at half-integer coordinates.
The sum of all vertex movements induced by rounding $G$ to integer coordinates is exactly $c_{\min}$ if and only if $F$ is satisfiable.
To achieve this, we introduce gadgets for the elements of $H(F)$ -- variables, clauses, edges and bends -- and construct $G$ and $c_{\min}$ in polynomial time. 

For exposition, we consider two types of vertices.
Black vertices start on integer grid points and do not need to be rounded.
Moving a black vertex to another integer grid point is allowed, but we will show that this is not optimal if  $F$ is satisfiable.
White vertices start at grid cell centers and thus will always move at
least one unit by rounding.  Let~$W \subseteq V(G)$ be the set of 
white vertices.  Now we give the construction of the various gadgets.

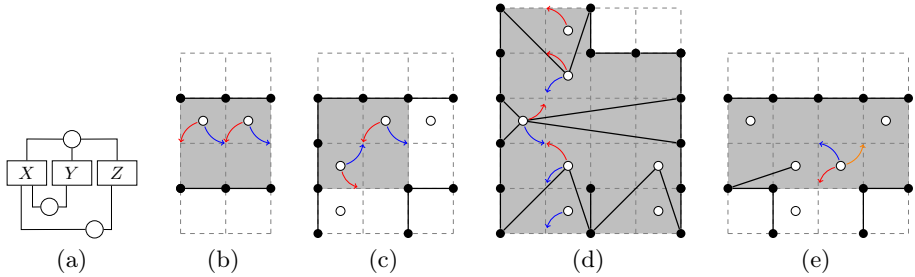
\begin{figure}[t]
  \centering
  \begin{tabularx}{\textwidth}{@{}cXcXcXcXc@{}}
    \scalebox{.75}{
      \begin{tikzpicture} 
        \draw (0,2)node[rectangle, minimum width=.7cm, draw](x){$X$}
        (.8,2)node[rectangle, minimum width=.7cm, draw](y){$Y$}
        (1.6,2)node[rectangle, minimum width=.7cm, draw](z){$Z$}; 
        \draw (0.8,2.6)node[circle, minimum width=.2cm, draw](c1){}
        (0.4,1.4)node[circle, minimum width=.2cm, draw](c2){}
        (1.2,1)node[circle, minimum width=.2cm, draw](c3){}; 
        \draw ($(x.south)+(0.1,0)$) -- ($(x.south)+(0.1,-0.38)$) -- (c2) -- ($(y.south)+(-0.1,-0.38)$) -- ($(y.south)-(0.1,0)$); 
        \draw ($(x.south)-(0.1,0)$) -- ($(x.south)-(0.1,0.78)$) -- (c3) -- ($(z.south)+(-0.1,-0.78)$) -- ($(z.south)-(0.1,0)$); 
        \draw (x.north) -- ($(x.north)+(0,0.38)$) -- (c1) -- ($(z.north)+(0,0.38)$) -- (z.north); 
        \draw (y.north)--(c1);
      \end{tikzpicture}
    }
    &&
    \scalebox{.6}{
      \begin{tikzreduction}[]{2}{4} 
        \draw[fill=gray!100, draw=none, opacity=.5] (0,1) -- (2,1) -- (2,3) -- (0,3) -- cycle;
        \draw (2,3) node[blacknode](2,3){} (0,3)
        node[blacknode](0,3){} (0,1) node[blacknode](0,1){} (1,3)
        node[blacknode](1,3){} (2,1) node[blacknode](2,1){}; 
        \draw (1.5,2.5) node[whitenode](line2){} (0,3)
        node[blacknode](0,3){} (0.5,2.5) node[whitenode](line){} (0,1)
        node[blacknode](0,1){} (1,1) node[blacknode](1,1){};
        \draw [thick] (0,3)to(0,3)to(1,3)to(2,3);
        \draw [thick] (0,1)to(0,1)to(1,1)to(2,1);
        \draw [move] (line) to (0,2);
        \draw [moveA] (line) to (1,2);
        \draw [move] (line2) to (1,2);
        \draw [moveA] (line2) to (2,2);
      \end{tikzreduction}
    }
    &&
    \scalebox{.6}{
      \begin{tikzreduction}[]{3}{4} 
        \draw[fill=gray!100, draw=none, opacity=.5] (0,1) -- (2,1) -- (2,3) -- (0,3) -- cycle;
        \draw (0,0) node[blacknode](0,0){} (0,1) node[blacknode](){} (2,1) node[blacknode](){}
        (0,3) node[blacknode](){} (1,3) node[blacknode](){} (0,2)
        node[blacknode](){} (2,3) node[blacknode](){} (0.5,1.5)
        node[whitenode](bend1){} (0.5,0.5) node[whitenode](){}  (2.5,2.5) node[whitenode](){} (1.5,2.5) node[whitenode](bend2){}  (3,3) node[blacknode](){} (2,0) node[blacknode](){} (3,1) node[blacknode](){}; 
        \draw [thick] (0,0)to(0,1)to(0,2)to(0,3)to(1,3)to(2,3)to(3,3);
        \draw [thick] (2,0)to(2,1)to(3,1);
        \draw [move] (bend1) to (1,1);
        \draw [move] (bend2) to (1,2);
        \draw [moveA] (bend1) to (1,2);
        \draw [moveA] (bend2) to (2,2);
      \end{tikzreduction}
    }
    &&
    \scalebox{.6}{
      \begin{tikzreduction}[]{4}{5} 
      \draw[fill=gray!100, draw=none, opacity=.5] (0,0) -- (0,5) -- (2,5) -- (2,4) -- (4,4) -- (4,0) -- cycle;
        \draw (0,0) node[blacknode](1ien73ttadpo8){} (3,4)
        node[blacknode](1hv2p7k71h8bn){} (1.5,4.5)
        node[whitenode](t1){} (4,2) node[blacknode](r1var1){}; 
        \draw (2,1) node[blacknode](uvepc2luaxcy){} (3.5,0.5)
        node[whitenode](1iusntk2tbrjr){} (4,1)
        node[blacknode](soknf1udh3jn){} (4,1)
        node[blacknode](rbottvar1){}; 
        \draw (4,0) node[blacknode](us09poqnlt2a){} (2,5)
        node[blacknode](1hrutdrf1se9l){} (0,2)
        node[blacknode](l1var1){} (4,3) node[blacknode](r2var1){}; 
        \draw (0,1) node[blacknode](lbottvar1){} (4,4)
        node[blacknode](rtoppvar1){} (4,4)
        node[blacknode](1jet99xtmpz6p){};
        \draw (1.5,0.5) node[whitenode](b1){} (1.5,1.5)
        node[whitenode](b2){} (2,4) node[blacknode](sl3geuv35qhu){}
        (2,0) node[blacknode](ubj5489w47qs){};
        \draw (0.5,2.5) node[whitenode](decider){} (1.5,3.5)
        node[whitenode](t2){} (0,3) node[blacknode](l2var1){} (2,4)
        node[blacknode](vb9lq5yx6tlu){}; 
        \draw (0,4) node[blacknode](ltoppvar1){} (3.5,1.5)
        node[whitenode](trqv20r71eu9){} (0,5)
        node[blacknode](s1018e5yusky){}; 
        \draw [thick] (lbottvar1)to(l1var1)to(l2var1)to(ltoppvar1);
        \draw [thick] (1ien73ttadpo8)to(lbottvar1);
        \draw [thick] (uvepc2luaxcy)to(ubj5489w47qs)to(b2)to(1ien73ttadpo8);
        \draw [thick] (s1018e5yusky)to(ltoppvar1);
        \draw [thick] (vb9lq5yx6tlu)to(1hrutdrf1se9l)to(t2)to(s1018e5yusky);
        \draw [thick] (soknf1udh3jn)to(us09poqnlt2a)to(trqv20r71eu9)to(ubj5489w47qs); 
        \draw [thick] (1jet99xtmpz6p)to(1hv2p7k71h8bn)to(sl3geuv35qhu);
        \draw [thick] (rbottvar1)to(r1var1)to(r2var1)to(rtoppvar1);
        \draw [thick] (l1var1)to(decider)to(l2var1);
        \draw [thick] (r1var1)to(decider)to(r2var1);
        \draw [move] (decider) to (1,3);
        \draw [move] (t1) to (1,5);
        \draw [move] (t2) to (1,4);
        \draw [moveA] (t2) to (1,3);
        \draw [moveA] (decider) to (1,2);
        \draw [moveA] (b1) to (1,0);
        \draw [moveA] (b2) to (1,1);
        \draw [move] (b2) to (1,2);
      \end{tikzreduction}
    }
    &&
    \scalebox{.6}{
      \begin{tikzreduction}[]{4}{4} 
        \draw[fill=gray!100, draw=none, opacity=.5] (0,1) -- (4,1) -- (4,3) -- (0,3) -- cycle;
        \draw (1.5,1.5) node[whitenode](c1wl){} (3,1)
        node[blacknode](3,1){} (3,3) node[blacknode](3,3){} (1,3)
        node[blacknode](1,3){} (0.5,2.5) node[whitenode](0.5,2.5){}
        (2,3) node[blacknode](2,3){} (4,3) node[blacknode](4,3){}
        (1,1) node[blacknode](1,1){} (1,0) node[blacknode](1,0){} (3,0) node[blacknode](3,0){} (2.5,1.5)
        node[whitenode](decider){} (3.5,2.5)
        node[whitenode](3.5,2.5){} (0,3) node[blacknode](0,3){} (0,1)
        node[blacknode](c1ankle){} (4,1) node[blacknode](4,1){} (1.5,0.5)
        node[whitenode](){};
        \draw [thick] (c1wl)to(c1ankle);
        \draw [thick] (c1ankle)to(1,1)to(1,0);
        \draw [thick] (3,0)to(3,1)to(4,1);
        \draw [thick] (0,3)to(1,3)to(2,3)to(3,3)to(4,3);
        \draw [move] (decider) to (2,1);
        \draw [moveA] (decider) to (2,2);
        \draw [moveB] (decider) to (3,2);
      \end{tikzreduction}
    }
    \\
    (a)&&(b)&&(c)&&(d)&&(e)
  \end{tabularx}
  \caption{(a) Graph $H(F)$ for formula $F = (\overline{X} \vee
    \overline{Y} \vee \overline{Z}) \wedge (X \vee Y) \wedge (X \vee
    Z)$, (b)~horizontal line gadget, (c)~bottom-to-right bend gadget,
    both with possible roundings, (d)~gadget for variable with two
    negated and one unnegated occurrences, (e)~all-negated clause gadget with
    three negated variables. Inner area of each gadget highlighted gray.}
  \label{figure:graph-line-bend-variable-clause}
\end{figure}

First, we introduce the line and bend gadgets.
These ensure consistency between variable and clause gadgets.
Every segment of the line gadget consists of four black vertices and
two edges forming a \emph{tunnel}, and a single white vertex inside;
see Fig.~\ref{figure:graph-line-bend-variable-clause}$\,$(b). 
The white vertex can be rounded most cheaply to exactly two possible integer grid points, depicted by the red and blue arrows.
By rounding a white vertex in one direction, we prohibit the neighbor in that direction to go the opposite way -- as both vertices would end up on the same integer grid point (which violates topological safety).
So, if the white vertex at one end of the line is rounded inward (blue arrow) the white vertex at the other end of that line must be rounded outward -- we say it is \emph{pushed}. 
The same holds for the bend gadgets, as can be seen in Fig.~\ref{figure:graph-line-bend-variable-clause}$\,$(c).

Next, consider the variable gadget depicted in Fig.~\ref{figure:graph-line-bend-variable-clause}$\,$(d). 
It has tunnels for vertical line gadgets for every negated and unnegated occurrence at the top and bottom respectively. 
At the center of this gadget, there is a white vertex that is connected to the gadget's walls by two triangles. 
Call this the \emph{assignment} vertex and note that it can be rounded
up or down, which makes the edges of the triangles block grid
points on the top or bottom tunnels, respectively.
The tunnels of that direction are then all forced to push into the
connected clause gadgets.
This represents the truth assignment of the corresponding variable.

Finally, the clause gadget is shown in
Fig.~\ref{figure:graph-line-bend-variable-clause}$\,$(e).
We describe the all-negated degree-3 version; the degree-2 version can be
constructed similarly.
There is a white \emph{satisfaction} vertex that can go to any of three possible integer grid points at equal cost.
These grid points belong to line gadgets and are only available if the line does not ``push.''
Then the satisfaction vertex can be rounded at cost $1$ if and only if the clause is satisfied.
Gadgets for all-unnegated clauses can be obtained by mirroring the
construction of
Fig.~\ref{figure:graph-line-bend-variable-clause}$\,$(e) at a
horizontal line.

The rounding cost of $G$ is bounded from below by $c_{\min} = |W|$
since every white vertex must be rounded at cost at least~$1$. 
If $F$ is satisfiable, there is a rounding that achieves this because
then we can round the assignment vertices such that the satisfaction
vertices can be rounded at cost~$1$. 
In the other direction, a satisfying assignment can be read off from the assignment vertices if rounding occurred at cost $c_{\min}$.

If none of the three candidate grid points for the satisfaction vertex are available, a topologically correct rounding must move a black vertex associated with that clause (of either the clause itself, the connected variables or the edges and bends connecting them).
This adds at least $1$ to the rounding cost without reducing the movement of any white vertex and thus such solutions cost strictly more than $c_{\min}$.
That is, if $c_{\min}$ is exceeded, then $F$ is unsatisfiable: any rounding corresponding to a satisfying truth assignment is cheaper.
This concludes our Karp reduction and the claim follows.
\qed
\end{proof}

\begin{corollary}
  \lang{TopologicallySafeSnapping} is also \NP-hard when using
  Euclidean distance.  In this case it is also \NP-hard to minimize
  the maximum movement instead of the sum.
\end{corollary}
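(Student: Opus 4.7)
The plan is to reuse the gadget construction and reduction from \lang{Planar monotone 3-SAT} used in the theorem, changing only the cost bound (resp. the max-movement threshold) to match the new objective. The key structural fact I would rely on is that every white vertex starts at a cell centre $(i+\tfrac12,j+\tfrac12)$, so all four surrounding integer points lie at Euclidean distance exactly $1/\sqrt 2$, while any black vertex that is moved travels at least one full grid unit. Thus the ``two cheap directions'' exploited by the line, bend and variable gadgets, and the ``three cheap directions'' exploited by the clause gadget, remain cheapest and mutually tied under the Euclidean metric, so the push/pull semantics of each gadget survive unchanged.

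For the Euclidean-sum variant I would set $c_{\min}^{\mathrm E}=|W|/\sqrt 2$ and repeat the argument of the theorem verbatim: every white vertex contributes at least $1/\sqrt 2$, every displaced black vertex contributes at least $1>1/\sqrt 2$, and topological safety forces the same combinatorial choices as before. Hence a rounding of total cost $c_{\min}^{\mathrm E}$ exists iff no black vertex moves, every white vertex goes to one of its originally admissible integer corners, and the line, bend, variable and clause constraints are simultaneously satisfiable -- which is exactly the satisfiability of $F$.

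For the maximum-movement variant I would ask whether there is a rounding with per-vertex $L_2$-displacement at most $1/\sqrt 2$. Black vertices still cannot move (their minimum positive displacement is $1>1/\sqrt 2$), each white vertex must go to one of its four corner candidates, and the remainder of the analysis carries over vertex by vertex, since the maximum over equal minimum-cost moves equals that common minimum exactly when every vertex takes a minimum move.

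The main point to verify is that no white vertex acquires, under the Euclidean metric, a new cheap move that would subvert the gadget semantics -- for instance, that the ``fourth corner'' (unused in the Manhattan argument) never becomes permissible while being topologically forbidden in the original analysis. This is not really an obstacle: the blocking arguments in the proof of the theorem are purely topological (shared grid points, forbidden edge crossings) and therefore metric-independent. Switching to Euclidean merely equalises the cost of all four corners, which does not enlarge the set of topologically safe moves. The gadget reasoning therefore transfers without modification, and the Karp reduction goes through for both objectives claimed in the corollary.
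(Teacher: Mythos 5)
Your proposal is correct and matches the paper's own (very brief) argument: reuse the reduction with $c_{\min}=\sqrt{0.5^2+0.5^2}\cdot|W| = |W|/\sqrt2$ for the Euclidean sum, and exploit the gap between $1/\sqrt2$ and $1$ (the minimum movement of any displaced black vertex) for the max-movement variant. Your extra observation that the gadget blocking is purely topological and hence metric-independent is a sound, slightly more explicit justification than the paper gives; the only nitpick is that the four corners were already equidistant under $L_1$, so the Euclidean metric does not ``equalise'' anything new.
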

\begin{sketchofproof}
  The above proof goes through with Euclidean distance and $c_{\min} =
  \sqrt{0.5^2 + 0.5^2} \cdot |W|$.  For minimizing the maximum
  movement, observe that rounding white vertices now costs less, but
  moving a black vertex still has cost at least~$1$: if $F$ is
  satisfiable, the maximum movement is $\sqrt{0.5^2 + 0.5^2}$,
  otherwise it is at least~$1$.
\end{sketchofproof}
This distinction of maximum movement ($\sqrt{0.5^2 + 0.5^2}\approx 0.71$ versus $1$) based on the satisfiability of $F$ also gives the following.

\begin{corollary}
 Euclidean \lang{TopologicallySafeSnapping} with the objective to minimize maximum movement is \APX{}-hard.
\end{corollary}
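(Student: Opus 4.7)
The plan is to read off \textsc{APX}-hardness directly from the additive/multiplicative gap established in the preceding corollary, turning the \NP{}-hardness reduction into a gap-introducing reduction. Concretely, the construction of the \NP{}-hardness proof maps any monotone planar 3-CNF formula~$F$ to a plane graph~$G$ such that, under the Euclidean max-movement objective, the optimal snapping cost is $\sqrt{0.5^2+0.5^2}=\tfrac{\sqrt{2}}{2}$ when $F$ is satisfiable and at least~$1$ when $F$ is unsatisfiable (since a violated clause forces at least one black vertex to move by at least one unit, which cannot be undone by white vertices snapping more cheaply).

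From this I would conclude that any polynomial-time $\alpha$-approximation algorithm with $\alpha < \tfrac{1}{\sqrt{2}/2}=\sqrt{2}$ could distinguish the two cases: given~$G$, it would return a value strictly less than~$1$ in the satisfiable case and at least~$1$ in the unsatisfiable case. Composing such an algorithm with our polynomial-time reduction would decide \lang{Planar Monotone 3-SAT}, contradicting its \NP-hardness. Hence, unless $\mathcal P=\NP$, the problem admits no PTAS, which is exactly the statement of \APX-hardness in the standard gap-based sense.

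To make this formally a reduction in the \APX{} sense, I would phrase the above as a gap-introducing reduction from \lang{Planar Monotone 3-SAT} with constant multiplicative gap $\sqrt{2}$; alternatively one can note that a constant inapproximability threshold rules out a PTAS for an \NPO{}-problem with integer-valued optimum in canonical form, which is equivalent to \APX-hardness via a trivial L-reduction from, e.g., gap-\textsc{Max-E3-Sat}. The cost bound $c_{\min}$ and the validity of the reduction are inherited verbatim from the previous theorem, so no new gadgetry is required.

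The main obstacle, such as it is, is not mathematical but notational: one must be careful that the gap statement in the preceding corollary is really multiplicative and bounded away from~$1$ by a fixed constant independent of the size of~$F$, and that both the \textsc{yes} and \textsc{no} values remain bounded on each instance (so that the multiplicative ratio is meaningful and not absorbed by additive slack). Both properties hold here because the bounds $\tfrac{\sqrt{2}}{2}$ and~$1$ are absolute constants arising from snapping a single vertex; no scaling with the instance size occurs. With this observation in place, the corollary follows immediately.
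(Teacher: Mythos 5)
Your proposal is correct and matches the paper's argument, which derives \APX{}-hardness from exactly the same multiplicative gap ($\sqrt{0.5^2+0.5^2}\approx 0.71$ in the satisfiable case versus at least $1$ otherwise) established in the preceding corollary. The paper states this in a single sentence; your write-up merely makes the gap-reduction and the $\sqrt{2}$ inapproximability threshold explicit, and is if anything more careful than the paper about the formal meaning of \APX{}-hardness.
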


\section{Exact Solution using Integer Linear Programming}
\label{sec:ilp}

In this section we provide an ILP-based exact algorithm for \lang{TSS}.
Recall that an instance is a graph $G=(V,E)$ with vertex coordinates.
For all $v \in V$, call these $(X_v,Y_v)$ and introduce integer decision variables $0\leq x_v\leq X_{\max}$ and $0\leq y_v\leq Y_{\max}$ to represent the ``rounded'' output position.
This leads to the following objective function.
 \begin{equation}
 \mathrm{Minimize\ } \textstyle \sum_{v \in V} |x_v - X_v | + |y_v - Y_v |
 \end{equation}
This formula is itself not linear, but can be made so with standard transformations~\cite{lpBook}.
Note that without any further constraints, this would just move every vertex to the nearest integer grid point.
We will now introduce constraints to ensure topological safety, that is, in the output no two points are on same grid point, no two edges intersect, and the edges at every node have the same cyclic order as in the input.
 
\paragraph{Vertices do not coincide.}
This can be ensured by adding the following constraints.
They too are not linear as stated, but can be readily linearized.
 \begin{equation}
    (x_v \neq x_w) \lor (y_v \neq y_w) \quad\quad\forall v,w \in V, v\neq w
 \label{equation:constraint-overlap}
 \end{equation} 
\paragraph{Possible directions.}
The most important departure from the metro-map drawing ILP is that, clearly, more than eight different directions are allowed.
A priori we have no further constraints than that every rounded vertex lies somewhere within the given bounding box.
Let $\mathcal{D}$ be the set of unique directions $D = (D_X, D_Y)$ in $[-X_{\max}, X_{\max}] \times [-Y_{\max}, Y_{\max}]$.
Considering the Farey sequence~\cite{gram1994concrete}, we know that
$|\mathcal{D}|$ is $\Theta(X_{\max}\cdot Y_{\max})$.
In the following, we let the set $\mathcal{D}$ be ordered counterclockwise, starting at the positive x-axis, allowing comparison of directions.
 
\paragraph{No two edges cross.}
The following constraints ensure that nonincident edges do not cross.
(Incident edges are allowed to touch in the shared vertex.)
We will follow the idea of N{\"o}llenburg and Wolff~\cite{nw-dlhqm-TVCG11}.
While producing octilinear drawings of metro maps, they ensured planarity by forcing every pair of nonincident edges to be separated by at least some distance $D_{\min}$ in at least one of the eight octilinear directions.
This minimum distance was partly an aesthetic guideline, but also guarantees planarity.
We are only interested in the latter and therefore pick $D_{\min}$ such that all planar realizations on the grid are allowed.
 
The separation distance $D_{\min}$ has to be small enough to separate any non-intersecting pair of edges in the output.
Here the bounding box leads to a bound since it bounds the slope of the edges; 
it suffices to choose $D_{\min} = 1/(\max \{ X_{\max}, Y_{\max} \}+1)$.
 
For every pair of nonincident edges $e_1,e_2 \in E$ and every
direction $D \in \mathcal{D}$, we introduce a binary decision variable
$\gamma_D(e_1,e_2) \in \{0,1\}$ indicating that $e_1$ and $e_2$ are
apart by~$D_{\min}$ in direction~$D$.
Every such pair must be separated in some direction (following the idea of \cite{n-admm-MTh05}).
 \begin{equation}\label{equation:one_gamma_eq_one}
  \textstyle \sum_{D \in \mathcal{D}} \gamma_D(e_1,e_2) = 1 \quad\quad\forall e_1,e_2 \in E, e_1,e_2 \text{ nonincident}
 \end{equation}
Let $L_\gamma = 2\cdot \max \{X_{\max},Y_{\max}\} +1$.  Then,
for any direction $D \in \mathcal{D}$, any pair of nonincident edges
$e_1, e_2$ and any $v \in e_1, w \in e_2$, we require the following.
\begin{equation}
  \label{equation:separation}
  D_X \cdot (x_{v} - x_{w}) + D_Y \cdot (y_{v}- y_{w}) + (1 - \gamma_D(e_1,e_2) )L_\gamma \geq D_{\min}
\end{equation}
Constraint~(\ref{equation:one_gamma_eq_one}) yields a unique
direction~$D$ with $\gamma_D = 1$.  By choice of $L_\gamma$, any
constraint~(\ref{equation:separation}) that involves a direction~$D$
with $\gamma_D = 0$ is trivially fulfilled.

\paragraph{Determine direction of incident edges.}
For incident edges $e_1, e_2 \in E$, we have to ensure that the directions of $e_1$ and $e_2$ differ.
Again, we generalize the metro-map drawing ILP -- dropping the ``relative position rule'' -- allowing edges to have any direction $D \in \mathcal{D}$.

To keep track of this, we introduce a binary decision variable
$\alpha_D(v,w) \in \{0,1\}$ for every vertex $v \in V$, every neighbor
$w\in N(v)$ and every direction $D\in \mathcal{D}$.  The meaning of
$\alpha_D(v,w)=1$ is that the direction of edge $(v,w)$ is~$D$.
\begin{equation}\label{equation:one_alpha_eq_one}
  \textstyle \sum_{D \in \mathcal{D}} \alpha_D(v,w) = 1 \quad\quad\forall v \in V \; \forall w \in N(v) 
\end{equation}
For any vertex $v \in V$, any neighbor $w \in N(v)$, and any direction
$D \in \mathcal{D}$, the following ensures that edge~$(v,w)$ indeed
has direction~$D$.  Let $L_\alpha = 2\cdot \max \{X_{\max},Y_{\max}\}+1$.
\begin{equation}
  \begin{aligned}
    x_w\cdot D_Y + y_v \cdot D_X - x_v \cdot D_Y \pm (1-\alpha_D(v,w)) L_\alpha & \gtreqless y_w \cdot D_X\\ 
    (1- \alpha_D(v,w)) L_\alpha + (x_w-x_v) \cdot D_X + (y_w-y_v) \cdot D_Y & \geq 0
  \end{aligned}
  \label{equation:alpha-assign}
 \end{equation}
From constraint (\ref{equation:one_alpha_eq_one}) we get that for every vertex-neighbor pair one $\alpha$ has to be set to $1$.
This $\alpha$ again enables one subset---as $L_\alpha$ dominates all
other terms---of constraints from (\ref{equation:alpha-assign}),
forcing comparison between edge slope and direction.  
This gives us the direction of edge $(v,w)$ with the correct sign.
 
\paragraph{Preserve cyclic order of outgoing edges.}
We use a binary decision variable $\beta(v,w) \in \{0,1\}$ for every vertex-neighbor pair, indicating if $w$ is the ``last'' neighbor of $v$ according to the order of $\mathcal{D}$.
The following preserves cyclic order.

\begin{equation}
  \label{equation:one_beta_eq_one}
		\textstyle \sum_{w \in N(v)} \beta(v,w) = 1
   \quad\quad\forall v \in V \text{ with } \deg (v) > 1 \hspace{4ex}
\end{equation}
\begin{equation}
  \begin{aligned}
    \hspace{2.8ex}\alpha_{D_1}&(v,w_i) \le \;\beta(v,w_i) + \textstyle 
    \sum_{D_w \in \mathcal{D} \colon D_w > D_1} \alpha_{D_w}(v,w_{i+1})\\ 
    &  \;\forall D_1 \in \mathcal{D} \; \forall v \in V, N(v) =
    \{w_1,w_2,\dots,w_k\} \; (k = \deg v > 1) 
  \end{aligned}
  \label{equation:unique-direction}
 \end{equation}
For notational convenience, we let $w_{k+1} = w_1$, as $N(v)$ is conceptually circular.
For any $\alpha$ set to $0$, the inequalities of (\ref{equation:unique-direction}) are trivially satisfied. 
Otherwise, there has to be a neighbor whose connecting edge has a later direction (and thus the corresponding $\alpha$ set to $1$), unless it is the last neighbor in the embedding of $v$.
To ensure that there is only one ``last neighbor''-violation of the constraints from (\ref{equation:unique-direction}), we introduce the constraints of (\ref{equation:one_beta_eq_one}).
Adding $\beta$ to every constraint of (\ref{equation:unique-direction}) also allows for the whole neighborhood of $v$ to be rotated around it.
This describes the full ILP and gives to the following.

\begin{theorem}
  The above ILP solves \lang{TopologicallySafeSnapping}.
\end{theorem}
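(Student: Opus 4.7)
The plan is to verify, one block of constraints at a time, that the ILP's feasible set is exactly the set of topologically-safe roundings inside the bounding box, and that its objective value equals the Manhattan rounding cost. The latter is immediate after the standard linearization of $|x_v - X_v| + |y_v - Y_v|$ via two non-negative auxiliary variables per absolute value, since $(x_v,y_v)$ ranges over integer grid points by construction. So correctness reduces to proving that (2)--(8) encode exactly (i)~no coincident vertices, (ii)~no crossings of nonincident edges, and (iii)~preservation of the combinatorial embedding.

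Part (i) is immediate for (2) after linearizing the disjunction with a binary indicator and a big-$M$ on the four strict inequalities. For (ii), the central tool is a \emph{separation lemma}: two nonincident integer segments inside $[0,X_{\max}]\times[0,Y_{\max}]$ are disjoint iff there exists $D\in\mathcal{D}$ such that every endpoint of one segment projects onto $D$ at least $D_{\min}$ ahead of every endpoint of the other. The $(\Leftarrow)$ direction is by projection. For $(\Rightarrow)$, disjointness of the convex hulls gives a separating line whose rational normal can be taken in $\mathcal{D}$ by shrinking the separation slab; the minimum positive gap between any two hulls of grid points in the bounding box is at least $1/\max\{X_{\max},Y_{\max}\}$, so $D_{\min} = 1/(\max\{X_{\max},Y_{\max}\}+1)$ is a strict lower bound. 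Given this lemma, (3) selects exactly one separating direction per nonincident pair, (4) enforces separation there for all four endpoint pairs, and $L_\gamma$ is chosen large enough so that (4) is vacuous in all unselected directions.

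For (iii), constraint (5) selects exactly one $D$ per directed edge $(v,w)$, and the two inequalities of (6) jointly encode, via big-$M$ with $L_\alpha$, the colinearity relation $(x_w-x_v)D_Y = (y_w-y_v)D_X$ \emph{and} the forward-orientation constraint along $D$, so $\alpha_D(v,w) = 1$ iff $D$ is the unit direction from $v$ to $w$. Cyclic order is then enforced by (7)--(8): fix $v$ with neighbors $w_1,\dots,w_k$ in the input cyclic order. For each $D_1$, (8) forces that whenever $\alpha_{D_1}(v,w_i)=1$, either $\beta(v,w_i)=1$ (the ``wrap-around'' index) or some direction strictly counterclockwise of $D_1$ is realized by $(v,w_{i+1})$. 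Since (7) admits exactly one $i$ with $\beta(v,w_i)=1$, the realized directions $D(v,w_1),\dots,D(v,w_k)$ appear strictly increasing around $\mathcal{D}$ with a single wrap, which is equivalent to preservation of the cyclic order; the freedom to place the wrap at any $i$ captures rotational invariance of the embedding.

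The main obstacle is the separation lemma, both the qualitative claim that non-crossing can always be witnessed by one of the $\Theta(X_{\max} Y_{\max})$ Farey directions in $\mathcal{D}$ and the quantitative lower bound on the separation gap that justifies the specific choice of $D_{\min}$. Once that is established, the remainder of the proof is a routine verification that the big-$M$ constants $L_\gamma$ and $L_\alpha$ dominate the residual linear terms over $[0,X_{\max}]\times[0,Y_{\max}]$, so that each nonselected constraint is trivially satisfied and each selected one exactly pins the intended geometric condition.
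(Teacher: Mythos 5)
The paper states this theorem without a formal proof: its entire justification is the constraint-by-constraint discussion preceding the theorem statement. Your block-by-block verification is therefore essentially the same argument made explicit, and your identification of the separation lemma behind constraints~(3)--(4) as the only non-routine step is exactly right --- it is also the one step the paper itself does not substantiate beyond asserting that $D_{\min} = 1/(\max\{X_{\max},Y_{\max}\}+1)$ ``suffices.''

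That said, your sketch of the separation lemma is imprecise in two places. First, the quantity bounded in constraint~(4) is the \emph{unnormalized} dot product $D_X(x_v-x_w)+D_Y(y_v-y_w)$, which is an integer because $D$ and the rounded coordinates are integral; hence it is at least $1$ whenever it is positive, and any $D_{\min}\leq 1$ works for that reason alone. Your appeal to a Euclidean ``minimum positive gap between hulls'' of $1/\max\{X_{\max},Y_{\max}\}$ is not the quantity appearing in the constraint, and it is not even a correct Euclidean bound (a grid point can lie at distance $1/\sqrt{1+X_{\max}^2}$ from a nonincident grid segment). Second, ``the rational normal can be taken in $\mathcal{D}$ by shrinking the separation slab'' needs an actual argument: the standard one distinguishes whether the closest pair of the two disjoint segments is vertex--vertex (then $D$ can be taken as the difference of two grid points, which lies in the box) or vertex--edge (then $D$ is the $90^\circ$ rotation of an edge direction, which lies in $[-Y_{\max},Y_{\max}]\times[-X_{\max},X_{\max}]$ and \emph{not} necessarily in the paper's $[-X_{\max},X_{\max}]\times[-Y_{\max},Y_{\max}]$ when the bounding box is not square). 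So the lemma as you state it is not available off the shelf for the paper's $\mathcal{D}$; this is arguably a gap in the paper as much as in your proposal, but your plan inherits it rather than closing it.
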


\paragraph{Graph drawing.}
Replacing the objective function with $\mathrm{Minimize\ } \max_{v\in V} y_v$, the ILP computes a straight-line grid drawing with the given embedding, width at most $X_{\max}$, and minimum height.
This allows us to find minimum-area drawings of small graphs.

\paragraph{Delayed constraint generation.} 

We can apply a delayed constraint generation approach (see for example Cinneck \cite{Chinneck}) to the above ILP as follows.
First we run the ILP without any constraints, which snaps each vertex to the nearest grid point. (This takes practically no time.)
We then test the result for topological validity, adding constraints corresponding to any violations.
Then we repeat until no violations occur.
This improves the runtime when few iterations suffice for a particular instance, but the approach should still be considered practically infeasible, especially for large bounding boxes: the set of possible directions $\mathcal{D}$ still results in a large program.
Future work could focus on reducing the brute-force inclusion of all possible directions.
\arxiv{Experimental results are found in the appendix.}
\camready{Experimental results are found in the full version of this
  paper~\cite{ldw-sgdgo-arXiv16}.}

\bigskip
\noindent
\textbf{Acknowledgments.}  We thank Gergely Mincsovics for suggesting
this problem to us.

\bibliographystyle{abbrv}
\bibliography{abbrv,gd}

\ifdefined\compileForArxiv

\appendix
\newpage
\section*{Appendix: Experimental evaluation}
In the following section, we will discuss performance of an IBM CPLEX implementation of the above ILP on graphs different in vertex count, size of bounding rectangle and number of ``difficult'' parts.
We ran experiments on a Linux machine with 16 cores (2666 MHz and 4 MB cache each), 16 GB memory and 20 GB swap space and using the Java bindings for CPlex.
Model size is measured by considering the number of rows and columns before and after CPLEX completes any preprocessing step, ``*'' means that no model could be created within given time and memory.
Any times given are in wall clock time, and ``\dag'' means that there was no result within 10 minutes of computation.

In the following, \emph{full model} is used for executions of the above ILP without row generation. 
The column \emph{first} gives the time until any feasible integer solution (not necessarily optimal) is reported by the integer solver. 
In both cases, \emph{optimal} gives the time until the solver reports an optimal solution.
For the output figures, white vertices represent the initial positions with the red arrows indicating actual vertex movement.

We start with a rather small graph (Fig.~\ref{example:graph_1}).
Because of its size, building the model and finding the solution is
quick.  However, its vertices are positioned so that many constraints
are not trivially satisfied and significant effort is required even by
the row generation approach.

\begin{figure}[hb]
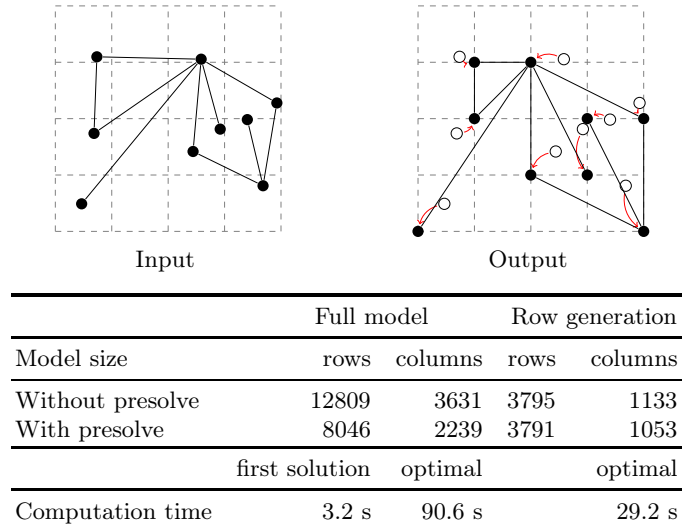

\centering
\begin{tabular}{cm{.1\textwidth}c}
\scalebox{.75}{
\begin{tikzreduction}{4}{4}
 \draw (0.46643,0.488909)node[blacknode](soejj3tqr6g4){}; 
\draw (2.58481,3.05461)node[blacknode](t8a6h7ishmwp){}; 
\draw (0.7365,3.0968)node[blacknode](qhqlvtjq8lrt){}; 
\draw (0.685863,1.738)node[blacknode](s162pejbsc37){}; 
\draw (2.44134,1.41729)node[blacknode](1ib393v8t30ab){}; 
\draw (3.68199,0.809619)node[blacknode](1jylmdvs4j12x){}; 
\draw (3.92674,2.27814)node[blacknode](rl7xs60bs7qt){}; 
\draw (2.92241,1.81396)node[blacknode](rhai685m0i7o){}; 
\draw (3.40347,1.98275)node[blacknode](vf9qpmewlqhv){}; 
\draw (t8a6h7ishmwp) -- (soejj3tqr6g4); 
\draw (qhqlvtjq8lrt) -- (t8a6h7ishmwp); 
\draw (s162pejbsc37) -- (qhqlvtjq8lrt); 
\draw (t8a6h7ishmwp) -- (s162pejbsc37); 
\draw (1ib393v8t30ab) -- (t8a6h7ishmwp); 
\draw (1jylmdvs4j12x) -- (1ib393v8t30ab); 
\draw (rl7xs60bs7qt) -- (1jylmdvs4j12x); 
\draw (t8a6h7ishmwp) -- (rl7xs60bs7qt); 
\draw (rhai685m0i7o) -- (t8a6h7ishmwp); 
\draw (vf9qpmewlqhv) -- (1jylmdvs4j12x); 
\end{tikzreduction}
}
 & &
\scalebox{.75}{
 \begin{tikzreduction}{4}{4}
\draw (0.0,0.0)node[blacknode](to2amp8wb2as){}; 
\draw (0.46643,0.488909)node[whitenode](oto2amp8wb2as){};
\draw [move] (oto2amp8wb2as) to (to2amp8wb2as);
\draw (2.0,3.0)node[blacknode](t4ibwdgmdc9y){}; 
\draw (2.58481,3.05461)node[whitenode](ot4ibwdgmdc9y){};
\draw [move] (ot4ibwdgmdc9y) to (t4ibwdgmdc9y);
\draw (1.0,3.0)node[blacknode](1jy89i2mgzp7n){}; 
\draw (0.7365,3.0968)node[whitenode](o1jy89i2mgzp7n){};
\draw [move] (o1jy89i2mgzp7n) to (1jy89i2mgzp7n);
\draw (1.0,2.0)node[blacknode](u855jh32h6jq){}; 
\draw (0.685863,1.738)node[whitenode](ou855jh32h6jq){};
\draw [move] (ou855jh32h6jq) to (u855jh32h6jq);
\draw (2.0,1.0)node[blacknode](uv95saxzemgi){}; 
\draw (2.44134,1.41729)node[whitenode](ouv95saxzemgi){};
\draw [move] (ouv95saxzemgi) to (uv95saxzemgi);
\draw (4.0,0.0)node[blacknode](rlamfuxju6ut){}; 
\draw (3.68199,0.809619)node[whitenode](orlamfuxju6ut){};
\draw [move] (orlamfuxju6ut) to (rlamfuxju6ut);
\draw (4.0,2.0)node[blacknode](1iun6zwmozr5g){}; 
\draw (3.92674,2.27814)node[whitenode](o1iun6zwmozr5g){};
\draw [move] (o1iun6zwmozr5g) to (1iun6zwmozr5g);
\draw (3.0,1.0)node[blacknode](qhl1h74c4rj9){}; 
\draw (2.92241,1.81396)node[whitenode](oqhl1h74c4rj9){};
\draw [move] (oqhl1h74c4rj9) to (qhl1h74c4rj9);
\draw (3.0,2.0)node[blacknode](r17sdm2o0jtv){}; 
\draw (3.40347,1.98275)node[whitenode](or17sdm2o0jtv){};
\draw [move] (or17sdm2o0jtv) to (r17sdm2o0jtv);
\draw (t4ibwdgmdc9y) -- (to2amp8wb2as); 
\draw (1jy89i2mgzp7n) -- (t4ibwdgmdc9y); 
\draw (u855jh32h6jq) -- (1jy89i2mgzp7n); 
\draw (t4ibwdgmdc9y) -- (u855jh32h6jq); 
\draw (uv95saxzemgi) -- (t4ibwdgmdc9y); 
\draw (rlamfuxju6ut) -- (uv95saxzemgi); 
\draw (1iun6zwmozr5g) -- (rlamfuxju6ut); 
\draw (t4ibwdgmdc9y) -- (1iun6zwmozr5g); 
\draw (qhl1h74c4rj9) -- (t4ibwdgmdc9y); 
\draw (r17sdm2o0jtv) -- (rlamfuxju6ut); 
 \end{tikzreduction}
}\\

Input  && Output \\
\end{tabular}

\medskip

\centering
\begin{tabular}{l@{\quad}r@{\quad}r@{\quad}r@{\quad}r}
\toprule
      & \multicolumn{2}{c}{Full model}&\multicolumn{2}{c}{Row generation}\\
\midrule
Model size & rows & columns & rows & columns \\
\midrule
Without presolve & 12809 & 3631 & 3795 & 1133 \\
With presolve & 8046 & 2239 & 3791 & 1053 \\
\toprule
      & first solution & optimal 	& \multicolumn{2}{r}{optimal}\\
\midrule
Computation time & 3.2 s		& 90.6 s	& \multicolumn{2}{r}{29.2 s}\\ 
\bottomrule
\end{tabular}
\caption{Graph 1: $|V| = 9, |E| = 10$}
\label{example:graph_1}
\end{figure}

In Fig.~\ref{example:graph_2}, the by far most expensive constraint is for checking the embedding of the central node.
Any other constraint is easily satisfiable.
In terms of computation time, there is not a big difference between
finding the first solution and closing the integrality gap.
Notice, that every vertex has one preferred integer grid point that is not preferred by any other vertex, so just rounding to the nearest grid point already gives a optimal solution.
This solution is found by the first run of the row generation approach almost immediately.
(Note that the $0.5$ second runtime includes setting up the Java environment, calling the CPLEX solver and checking topology.)
The difference between the full model and the row generation approach becomes even more important when the size of the bounding box increases.
Consider Fig.~\ref{example:graph_3}. 
While still easy to round in the same sense as above, as size of the bounding rectangle increases, so does the time for building and solving the full model.
However, this has no impact on solving time for the row generation approach for ``easy'' graphs.

\begin{figure}[tb]
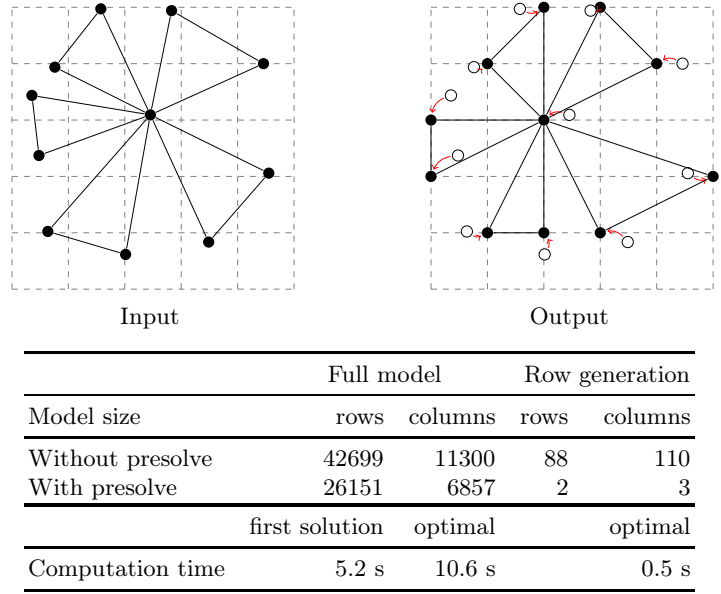

\centering
\begin{tabular}{cm{.1\textwidth}c}
\scalebox{.75}{
\begin{tikzreduction}{5}{5}
\draw (2.45281,3.09187)node[blacknode](qdsqzieux8vs){}; 
\draw (0.635306,1.02368)node[blacknode](skpkoiwtirlg){}; 
\draw (2.01411,0.616308)node[blacknode](tol5yc1bnacy){}; 
\draw (3.48691,0.835662)node[blacknode](qhnajvlfdilh){}; 
\draw (4.55235,2.05777)node[blacknode](1iyq1ycej5ois){}; 
\draw (4.45834,4.00063)node[blacknode](s4v4qszjlve9){}; 
\draw (2.82885,4.94072)node[blacknode](uvjqyl66tfmr){}; 
\draw (1.5754,4.97206)node[blacknode](uvbf8opegxut){}; 
\draw (0.760656,3.93796)node[blacknode](rkziyke8546e){}; 
\draw (0.353282,3.43657)node[blacknode](qxr42yna0rhv){}; 
\draw (0.478627,2.37114)node[blacknode](t4yhup5ul5cx){}; 
\draw (skpkoiwtirlg) -- (qdsqzieux8vs); 
\draw (tol5yc1bnacy) -- (skpkoiwtirlg); 
\draw (qdsqzieux8vs) -- (tol5yc1bnacy); 
\draw (qhnajvlfdilh) -- (qdsqzieux8vs); 
\draw (1iyq1ycej5ois) -- (qhnajvlfdilh); 
\draw (qdsqzieux8vs) -- (1iyq1ycej5ois); 
\draw (s4v4qszjlve9) -- (qdsqzieux8vs); 
\draw (uvjqyl66tfmr) -- (s4v4qszjlve9); 
\draw (qdsqzieux8vs) -- (uvjqyl66tfmr); 
\draw (uvbf8opegxut) -- (qdsqzieux8vs); 
\draw (rkziyke8546e) -- (uvbf8opegxut); 
\draw (qdsqzieux8vs) -- (rkziyke8546e); 
\draw (qxr42yna0rhv) -- (qdsqzieux8vs); 
\draw (t4yhup5ul5cx) -- (qxr42yna0rhv); 
\draw (qdsqzieux8vs) -- (t4yhup5ul5cx); 
\end{tikzreduction}
}
 & &
	\scalebox{.75}{
 \begin{tikzreduction}{5}{5}
  \draw (2.0,3.0)node[blacknode](s0x9gkay0htj){}; 
\draw (2.45281,3.09187)node[whitenode](os0x9gkay0htj){};
\draw [move] (os0x9gkay0htj) to (s0x9gkay0htj);
\draw (1.0,1.0)node[blacknode](qhnufc4fugxd){}; 
\draw (0.635306,1.02368)node[whitenode](oqhnufc4fugxd){};
\draw [move] (oqhnufc4fugxd) to (qhnufc4fugxd);
\draw (2.0,1.0)node[blacknode](1jei81te2flgm){}; 
\draw (2.01411,0.616308)node[whitenode](o1jei81te2flgm){};
\draw [move] (o1jei81te2flgm) to (1jei81te2flgm);
\draw (3.0,1.0)node[blacknode](1hrubiqvowhtk){}; 
\draw (3.48691,0.835662)node[whitenode](o1hrubiqvowhtk){};
\draw [move] (o1hrubiqvowhtk) to (1hrubiqvowhtk);
\draw (5.0,2.0)node[blacknode](qhnutjj9gs4k){}; 
\draw (4.55235,2.05777)node[whitenode](oqhnutjj9gs4k){};
\draw [move] (oqhnutjj9gs4k) to (qhnutjj9gs4k);
\draw (4.0,4.0)node[blacknode](1hv39joxlf588){}; 
\draw (4.45834,4.00063)node[whitenode](o1hv39joxlf588){};
\draw [move] (o1hv39joxlf588) to (1hv39joxlf588);
\draw (3.0,5.0)node[blacknode](urdzxtflaa0n){}; 
\draw (2.82885,4.94072)node[whitenode](ourdzxtflaa0n){};
\draw [move] (ourdzxtflaa0n) to (urdzxtflaa0n);
\draw (2.0,5.0)node[blacknode](vb9m1grtrjg9){}; 
\draw (1.5754,4.97206)node[whitenode](ovb9m1grtrjg9){};
\draw [move] (ovb9m1grtrjg9) to (vb9m1grtrjg9);
\draw (1.0,4.0)node[blacknode](s1jgidsbs5rm){}; 
\draw (0.760656,3.93796)node[whitenode](os1jgidsbs5rm){};
\draw [move] (os1jgidsbs5rm) to (s1jgidsbs5rm);
\draw (0.0,3.0)node[blacknode](t84kwvv1qmw3){}; 
\draw (0.353282,3.43657)node[whitenode](ot84kwvv1qmw3){};
\draw [move] (ot84kwvv1qmw3) to (t84kwvv1qmw3);
\draw (0.0,2.0)node[blacknode](vf97oiip875d){}; 
\draw (0.478627,2.37114)node[whitenode](ovf97oiip875d){};
\draw [move] (ovf97oiip875d) to (vf97oiip875d);
\draw (qhnufc4fugxd) -- (s0x9gkay0htj); 
\draw (1jei81te2flgm) -- (qhnufc4fugxd); 
\draw (s0x9gkay0htj) -- (1jei81te2flgm); 
\draw (1hrubiqvowhtk) -- (s0x9gkay0htj); 
\draw (qhnutjj9gs4k) -- (1hrubiqvowhtk); 
\draw (s0x9gkay0htj) -- (qhnutjj9gs4k); 
\draw (1hv39joxlf588) -- (s0x9gkay0htj); 
\draw (urdzxtflaa0n) -- (1hv39joxlf588); 
\draw (s0x9gkay0htj) -- (urdzxtflaa0n); 
\draw (vb9m1grtrjg9) -- (s0x9gkay0htj); 
\draw (s1jgidsbs5rm) -- (vb9m1grtrjg9); 
\draw (s0x9gkay0htj) -- (s1jgidsbs5rm); 
\draw (t84kwvv1qmw3) -- (s0x9gkay0htj); 
\draw (vf97oiip875d) -- (t84kwvv1qmw3); 
\draw (s0x9gkay0htj) -- (vf97oiip875d); 
 \end{tikzreduction}
}\\

Input && Output \\
\end{tabular}

\medskip

\centering
\begin{tabular}{l@{\quad}r@{\quad}r@{\quad}r@{\quad}r}
\toprule
      & \multicolumn{2}{c}{Full model}&\multicolumn{2}{c}{Row generation}\\
\midrule
Model size & rows & columns & rows & columns \\
\midrule
Without presolve & 42699 & 11300 & 88 & 110 \\
With presolve & 26151 &6857 & 2 & 3 \\
\toprule
      & first solution & optimal 	& \multicolumn{2}{r}{optimal}\\
\midrule
Computation time & 5.2 s		& 10.6 s	& \multicolumn{2}{r}{0.5 s}\\ 
\bottomrule
\end{tabular}
\caption{Graph 2: $|V| = 11, |E| = 15$}
\label{example:graph_2}
\end{figure}

Consider the graph in Fig.~\ref{example:graph_4}.
While too large to round with the full model approach in the allotted time of $10$ minutes, the row generation only has to add one constraint from Equation \ref{equation:constraint-overlap} for two vertices of the upper-right corner.
Rebuilding the model and solving with this constraint runs in reasonable time (compared to the full model).
Notice that this constraint does not involve the direction set $\mathcal{D}$.

When rounding graphs with vertices starting in close proximity (like in Fig.~\ref{example:graph_5}) several things can be noticed.
First of all, small bounding box result in small and easy-to-solve models.
The size of the bounding box has extreme effect on the runtime (compare Fig.~\ref{example:graph_1}, which has only two more vertices but a much larger bounding box).
Second, when many constraints are violated during the row generation processes, iteratively adding the constraints results in runtime exceeding the time for solving the full model in the first place.

We end this section with two rather small examples
(Figs.~\ref{example:graph_5} and~\ref{example:graph_6}).
Both have a rather large number of vertices compared to the size of the bounding rectangle and thus include many ``difficult'' parts.
The row generation approach clearly outperforms the full model (while
still being infeasible in practice).

\begin{figure}[bp]
\centering
\begin{tabular}{cm{.02\textwidth}c}
\resizebox{0.25\textwidth}{!}{
\begin{tikzreduction}{27}{26}
\draw (1.32304,1.70636)node[blacknode](somycxbeshmg){}; 
\draw (25.6204,1.78032)node[blacknode](ureklq192ue9){}; 
\draw (16.4118,5.73742)node[blacknode](skk1lqyt0g1d){}; 
\draw (6.05678,5.73742)node[blacknode](1hruao1z7oawi){}; 
\draw (1.50796,20.5303)node[blacknode](vey5r8ji7p7l){}; 
\draw (5.94583,15.8336)node[blacknode](toanil2gz67o){}; 
\draw (15.6722,15.3158)node[blacknode](vb42i0n3lflg){}; 
\draw (16.1529,20.1605)node[blacknode](u7okgi4mklyp){}; 
\draw (23.7343,15.1309)node[blacknode](qe10rcxpdt13){}; 
\draw (26.0272,23.23)node[blacknode](rl2ulwg4llcy){}; 
\draw (1.39701,24.8942)node[blacknode](1iemp6keto8qd){}; 
\draw (1.69287,21.6398)node[blacknode](1jy7r41e8bf3s){}; 
\draw (24.5479,23.9697)node[blacknode](qe9v6h3bm7uf){}; 
\draw [thick] (ureklq192ue9) -- (somycxbeshmg); 
\draw [thick] (skk1lqyt0g1d) -- (ureklq192ue9); 
\draw [thick] (1hruao1z7oawi) -- (skk1lqyt0g1d); 
\draw [thick] (somycxbeshmg) -- (1hruao1z7oawi); 
\draw [thick] (vey5r8ji7p7l) -- (somycxbeshmg); 
\draw [thick] (toanil2gz67o) -- (vey5r8ji7p7l); 
\draw [thick] (1hruao1z7oawi) -- (toanil2gz67o); 
\draw [thick] (vb42i0n3lflg) -- (1hruao1z7oawi); 
\draw [thick] (toanil2gz67o) -- (vb42i0n3lflg); 
\draw [thick] (u7okgi4mklyp) -- (toanil2gz67o); 
\draw [thick] (vb42i0n3lflg) -- (u7okgi4mklyp); 
\draw [thick] (skk1lqyt0g1d) -- (vb42i0n3lflg); 
\draw [thick] (qe10rcxpdt13) -- (skk1lqyt0g1d); 
\draw [thick] (vb42i0n3lflg) -- (qe10rcxpdt13); 
\draw [thick] (rl2ulwg4llcy) -- (vb42i0n3lflg); 
\draw [thick] (ureklq192ue9) -- (rl2ulwg4llcy); 
\draw [thick] (qe10rcxpdt13) -- (ureklq192ue9); 
\draw [thick] (u7okgi4mklyp) -- (vey5r8ji7p7l); 
\draw [thick] (rl2ulwg4llcy) -- (u7okgi4mklyp); 
\draw [thick] (vey5r8ji7p7l) -- (rl2ulwg4llcy); 
\draw [thick] (1jy7r41e8bf3s) -- (1iemp6keto8qd); 
\draw [thick] (qe9v6h3bm7uf) -- (1jy7r41e8bf3s); 
\draw [thick] (1iemp6keto8qd) -- (qe9v6h3bm7uf); 
\draw [thick] (rl2ulwg4llcy) -- (qe9v6h3bm7uf); 
\draw [thick] (vey5r8ji7p7l) -- (1jy7r41e8bf3s); 
\end{tikzreduction}
}
 &&
 \resizebox{0.715\textwidth}{!}{
 \begin{tikzreduction}{27}{26}
  \draw (1.0,2.0)node[blacknode](1jyaikj75op4i){}; 
\draw (1.32304,1.70636)node[whitenode](o1jyaikj75op4i){};
\draw [move] (o1jyaikj75op4i) to (1jyaikj75op4i);
\draw (26.0,2.0)node[blacknode](sksue3m747y1){}; 
\draw (25.6204,1.78032)node[whitenode](osksue3m747y1){};
\draw [move] (osksue3m747y1) to (sksue3m747y1);
\draw (16.0,6.0)node[blacknode](1iauwjxir73uf){}; 
\draw (16.4118,5.73742)node[whitenode](o1iauwjxir73uf){};
\draw [move] (o1iauwjxir73uf) to (1iauwjxir73uf);
\draw (6.0,6.0)node[blacknode](skk17pum5f05){}; 
\draw (6.05678,5.73742)node[whitenode](oskk17pum5f05){};
\draw [move] (oskk17pum5f05) to (skk17pum5f05);
\draw (2.0,21.0)node[blacknode](1ibjs90fzby9h){}; 
\draw (1.50796,20.5303)node[whitenode](o1ibjs90fzby9h){};
\draw [move] (o1ibjs90fzby9h) to (1ibjs90fzby9h);
\draw (6.0,16.0)node[blacknode](qxi7q34mtnc5){}; 
\draw (5.94583,15.8336)node[whitenode](oqxi7q34mtnc5){};
\draw [move] (oqxi7q34mtnc5) to (qxi7q34mtnc5);
\draw (16.0,15.0)node[blacknode](t49j1dozm5o1){}; 
\draw (15.6722,15.3158)node[whitenode](ot49j1dozm5o1){};
\draw [move] (ot49j1dozm5o1) to (t49j1dozm5o1);
\draw (16.0,20.0)node[blacknode](t4ncqrt1xfe8){}; 
\draw (16.1529,20.1605)node[whitenode](ot4ncqrt1xfe8){};
\draw [move] (ot4ncqrt1xfe8) to (t4ncqrt1xfe8);
\draw (24.0,15.0)node[blacknode](sl66yaiqscip){}; 
\draw (23.7343,15.1309)node[whitenode](osl66yaiqscip){};
\draw [move] (osl66yaiqscip) to (sl66yaiqscip);
\draw (26.0,23.0)node[blacknode](ubjly2p0vpdk){}; 
\draw (26.0272,23.23)node[whitenode](oubjly2p0vpdk){};
\draw [move] (oubjly2p0vpdk) to (ubjly2p0vpdk);
\draw (1.0,25.0)node[blacknode](1k1zhepiw22bc){}; 
\draw (1.39701,24.8942)node[whitenode](o1k1zhepiw22bc){};
\draw [move] (o1k1zhepiw22bc) to (1k1zhepiw22bc);
\draw (2.0,22.0)node[blacknode](1ibesw5rwx7q9){}; 
\draw (1.69287,21.6398)node[whitenode](o1ibesw5rwx7q9){};
\draw [move] (o1ibesw5rwx7q9) to (1ibesw5rwx7q9);
\draw (25.0,24.0)node[blacknode](qy4xsqkucig1){}; 
\draw (24.5479,23.9697)node[whitenode](oqy4xsqkucig1){};
\draw [move] (oqy4xsqkucig1) to (qy4xsqkucig1);
\draw [thick] (sksue3m747y1) -- (1jyaikj75op4i); 
\draw [thick] (1iauwjxir73uf) -- (sksue3m747y1); 
\draw [thick] (skk17pum5f05) -- (1iauwjxir73uf); 
\draw [thick] (1jyaikj75op4i) -- (skk17pum5f05); 
\draw [thick] (1ibjs90fzby9h) -- (1jyaikj75op4i); 
\draw [thick] (qxi7q34mtnc5) -- (1ibjs90fzby9h); 
\draw [thick] (skk17pum5f05) -- (qxi7q34mtnc5); 
\draw [thick] (t49j1dozm5o1) -- (skk17pum5f05); 
\draw [thick] (qxi7q34mtnc5) -- (t49j1dozm5o1); 
\draw [thick] (t4ncqrt1xfe8) -- (qxi7q34mtnc5); 
\draw [thick] (t49j1dozm5o1) -- (t4ncqrt1xfe8); 
\draw [thick] (1iauwjxir73uf) -- (t49j1dozm5o1); 
\draw [thick] (sl66yaiqscip) -- (1iauwjxir73uf); 
\draw [thick] (t49j1dozm5o1) -- (sl66yaiqscip); 
\draw [thick] (ubjly2p0vpdk) -- (t49j1dozm5o1); 
\draw [thick] (sksue3m747y1) -- (ubjly2p0vpdk); 
\draw [thick] (sl66yaiqscip) -- (sksue3m747y1); 
\draw [thick] (t4ncqrt1xfe8) -- (1ibjs90fzby9h); 
\draw [thick] (ubjly2p0vpdk) -- (t4ncqrt1xfe8); 
\draw [thick] (1ibjs90fzby9h) -- (ubjly2p0vpdk); 
\draw [thick] (1ibesw5rwx7q9) -- (1k1zhepiw22bc); 
\draw [thick] (qy4xsqkucig1) -- (1ibesw5rwx7q9); 
\draw [thick] (1k1zhepiw22bc) -- (qy4xsqkucig1); 
\draw [thick] (ubjly2p0vpdk) -- (qy4xsqkucig1); 
\draw [thick] (1ibjs90fzby9h) -- (1ibesw5rwx7q9); 
 \end{tikzreduction}
 }\\

Input && Output \\
\end{tabular}

\medskip

\centering
\begin{tabular}{l@{\quad}r@{\quad}r@{\quad}r@{\quad}r}
\toprule
      & \multicolumn{2}{c}{Full model}&\multicolumn{2}{c}{Row generation}\\
\midrule
Model size & rows & columns & rows & columns \\
\midrule
Without presolve & * & * & 104 & 130 \\
With presolve  & * & * & 4 & 5 \\
\toprule
      & first solution & optimal 	& \multicolumn{2}{r}{optimal}\\
\midrule
Computation time & \dag	& \dag	& \multicolumn{2}{r}{0.4 s}	\\ 
\bottomrule
\end{tabular}
\caption{Graph 3: $|V| = 13, |E| = 25$}
\label{example:graph_3}
\end{figure}

\begin{figure}[tp]
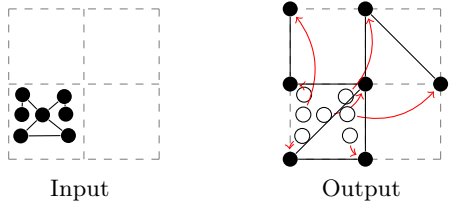

\centering
\begin{tabular}{cm{.1\textwidth}c}
\resizebox{0.45\textwidth}{!}{
\begin{tikzreduction}{14}{14}
\draw (0.794544,0.80155)node[blacknode](u7zo9czx7c69){}; 
\draw (2.61047,2.08096)node[blacknode](1ib0g65xwl8oi){}; 
\draw (5.16928,1.04917)node[blacknode](1h82kk7ky7u48){}; 
\draw (6.61375,2.34922)node[blacknode](toa2jhz210s9){}; 
\draw (2.77556,3.381)node[blacknode](1k1x8nv73ab8h){}; 
\draw (0.650094,6.06362)node[blacknode](to2bt2t3on5d){}; 
\draw (4.22004,6.53825)node[blacknode](t89ovl2wvxv6){}; 
\draw (6.24233,3.58735)node[blacknode](t4q7ba847k2s){}; 
\draw (6.82012,6.90969)node[blacknode](qe1kpijtvqzm){}; 
\draw (8.45031,2.34922)node[blacknode](to4l9e7apzjm){}; 
\draw (9.44088,6.47631)node[blacknode](1jefyzgrc6gqe){}; 
\draw (9.64719,0.946)node[blacknode](s4p2hbwi7x9w){}; 
\draw (3.27081,5.48583)node[blacknode](ubgaohj2ausj){}; 
\draw (2.92001,4.37151)node[blacknode](tonykyqxasky){}; 
\draw (4.50894,4.28896)node[blacknode](u7p2phtf6uk3){}; 
\draw (1.39063,11.2733)node[blacknode](uvhgnayfdhki){}; 
\draw (1.44428,7.81269)node[blacknode](1hbg1fg120svn){}; 
\draw (9.03613,8.537)node[blacknode](1hrotj5n42o8o){}; 
\draw (11.1822,3.06444)node[blacknode](urpnozfs6reh){}; 
\draw (10.6189,11.7025)node[blacknode](t87d2udbuc12){}; 
\draw (0.800456,13.3121)node[blacknode](qxtub3ua6xx0){}; 
\draw (0.773625,11.8902)node[blacknode](soem95oaxxdj){}; 
\draw (10.9944,12.1049)node[blacknode](1h7geas8d2b74){}; 
\draw (12.0138,0.542771)node[blacknode](qxtdxhsvym9i){}; 
\draw (13.3283,0.6769)node[blacknode](1iepgykiv67n8){}; 
\draw (12.3626,13.6876)node[blacknode](u88h77ao6cfa){}; 
\draw (1ib0g65xwl8oi) -- (u7zo9czx7c69); 
\draw (1h82kk7ky7u48) -- (1ib0g65xwl8oi); 
\draw (toa2jhz210s9) -- (1h82kk7ky7u48); 
\draw (1k1x8nv73ab8h) -- (toa2jhz210s9); 
\draw (to2bt2t3on5d) -- (1k1x8nv73ab8h); 
\draw (t89ovl2wvxv6) -- (to2bt2t3on5d); 
\draw (t4q7ba847k2s) -- (t89ovl2wvxv6); 
\draw (qe1kpijtvqzm) -- (t4q7ba847k2s); 
\draw (to4l9e7apzjm) -- (qe1kpijtvqzm); 
\draw (1jefyzgrc6gqe) -- (to4l9e7apzjm); 
\draw (s4p2hbwi7x9w) -- (1jefyzgrc6gqe); 
\draw (t4q7ba847k2s) -- (1k1x8nv73ab8h); 
\draw (to4l9e7apzjm) -- (t4q7ba847k2s); 
\draw (s4p2hbwi7x9w) -- (to4l9e7apzjm); 
\draw (s4p2hbwi7x9w) -- (toa2jhz210s9); 
\draw (1h82kk7ky7u48) -- (s4p2hbwi7x9w); 
\draw (tonykyqxasky) -- (ubgaohj2ausj); 
\draw (u7p2phtf6uk3) -- (tonykyqxasky); 
\draw (ubgaohj2ausj) -- (u7p2phtf6uk3); 
\draw (1hbg1fg120svn) -- (uvhgnayfdhki); 
\draw (qe1kpijtvqzm) -- (1hbg1fg120svn); 
\draw (1hrotj5n42o8o) -- (qe1kpijtvqzm); 
\draw (urpnozfs6reh) -- (1hrotj5n42o8o); 
\draw (t87d2udbuc12) -- (urpnozfs6reh); 
\draw (1hrotj5n42o8o) -- (t87d2udbuc12); 
\draw (uvhgnayfdhki) -- (1hrotj5n42o8o); 
\draw (qe1kpijtvqzm) -- (uvhgnayfdhki); 
\draw (soem95oaxxdj) -- (qxtub3ua6xx0); 
\draw (1h7geas8d2b74) -- (soem95oaxxdj); 
\draw (qxtdxhsvym9i) -- (1h7geas8d2b74); 
\draw (1iepgykiv67n8) -- (qxtdxhsvym9i); 
\draw (u88h77ao6cfa) -- (1iepgykiv67n8); 
\draw (qxtub3ua6xx0) -- (u88h77ao6cfa); 
\draw (u88h77ao6cfa) -- (1h7geas8d2b74); 
\end{tikzreduction}
}
 &&
 \resizebox{0.45\textwidth}{!}{
 \begin{tikzreduction}{14}{14}
  \draw (1.0,1.0)node[blacknode](to2bvbaw7m1y){}; 
\draw (0.794544,0.80155)node[whitenode](oto2bvbaw7m1y){};
\draw [move] (oto2bvbaw7m1y) to (to2bvbaw7m1y);
\draw (3.0,2.0)node[blacknode](qxljr4sz1s8h){}; 
\draw (2.61047,2.08096)node[whitenode](oqxljr4sz1s8h){};
\draw [move] (oqxljr4sz1s8h) to (qxljr4sz1s8h);
\draw (5.0,1.0)node[blacknode](1k22bg7a8fwvq){}; 
\draw (5.16928,1.04917)node[whitenode](o1k22bg7a8fwvq){};
\draw [move] (o1k22bg7a8fwvq) to (1k22bg7a8fwvq);
\draw (7.0,2.0)node[blacknode](skvl040elap2){}; 
\draw (6.61375,2.34922)node[whitenode](oskvl040elap2){};
\draw [move] (oskvl040elap2) to (skvl040elap2);
\draw (3.0,3.0)node[blacknode](skk2dr8dnui9){}; 
\draw (2.77556,3.381)node[whitenode](oskk2dr8dnui9){};
\draw [move] (oskk2dr8dnui9) to (skk2dr8dnui9);
\draw (1.0,6.0)node[blacknode](1h7rftrmmu7ug){}; 
\draw (0.650094,6.06362)node[whitenode](o1h7rftrmmu7ug){};
\draw [move] (o1h7rftrmmu7ug) to (1h7rftrmmu7ug);
\draw (4.0,7.0)node[blacknode](1hux59pt3o0v9){}; 
\draw (4.22004,6.53825)node[whitenode](o1hux59pt3o0v9){};
\draw [move] (o1hux59pt3o0v9) to (1hux59pt3o0v9);
\draw (6.0,4.0)node[blacknode](1hb7mzdrn5q3o){}; 
\draw (6.24233,3.58735)node[whitenode](o1hb7mzdrn5q3o){};
\draw [move] (o1hb7mzdrn5q3o) to (1hb7mzdrn5q3o);
\draw (7.0,7.0)node[blacknode](vbf6dm41ya0g){}; 
\draw (6.82012,6.90969)node[whitenode](ovbf6dm41ya0g){};
\draw [move] (ovbf6dm41ya0g) to (vbf6dm41ya0g);
\draw (8.0,2.0)node[blacknode](rkwt4zmyj9ro){}; 
\draw (8.45031,2.34922)node[whitenode](orkwt4zmyj9ro){};
\draw [move] (orkwt4zmyj9ro) to (rkwt4zmyj9ro);
\draw (9.0,6.0)node[blacknode](r1ifkmxd94j7){}; 
\draw (9.44088,6.47631)node[whitenode](or1ifkmxd94j7){};
\draw [move] (or1ifkmxd94j7) to (r1ifkmxd94j7);
\draw (10.0,1.0)node[blacknode](1iesa7y1ysj76){}; 
\draw (9.64719,0.946)node[whitenode](o1iesa7y1ysj76){};
\draw [move] (o1iesa7y1ysj76) to (1iesa7y1ysj76);
\draw (3.0,5.0)node[blacknode](1hv0gx9pyo484){}; 
\draw (3.27081,5.48583)node[whitenode](o1hv0gx9pyo484){};
\draw [move] (o1hv0gx9pyo484) to (1hv0gx9pyo484);
\draw (3.0,4.0)node[blacknode](qe9ch1uau2ch){}; 
\draw (2.92001,4.37151)node[whitenode](oqe9ch1uau2ch){};
\draw [move] (oqe9ch1uau2ch) to (qe9ch1uau2ch);
\draw (5.0,4.0)node[blacknode](1je9xyvzbka45){}; 
\draw (4.50894,4.28896)node[whitenode](o1je9xyvzbka45){};
\draw [move] (o1je9xyvzbka45) to (1je9xyvzbka45);
\draw (1.0,11.0)node[blacknode](qhw4yxfb0ew5){}; 
\draw (1.39063,11.2733)node[whitenode](oqhw4yxfb0ew5){};
\draw [move] (oqhw4yxfb0ew5) to (qhw4yxfb0ew5);
\draw (1.0,8.0)node[blacknode](rl2vgftpqng4){}; 
\draw (1.44428,7.81269)node[whitenode](orl2vgftpqng4){};
\draw [move] (orl2vgftpqng4) to (rl2vgftpqng4);
\draw (9.0,9.0)node[blacknode](u8b8kvq48tv6){}; 
\draw (9.03613,8.537)node[whitenode](ou8b8kvq48tv6){};
\draw [move] (ou8b8kvq48tv6) to (u8b8kvq48tv6);
\draw (11.0,3.0)node[blacknode](1jifjgjj8mhbn){}; 
\draw (11.1822,3.06444)node[whitenode](o1jifjgjj8mhbn){};
\draw [move] (o1jifjgjj8mhbn) to (1jifjgjj8mhbn);
\draw (11.0,11.0)node[blacknode](1iunpbk9xf037){}; 
\draw (10.6189,11.7025)node[whitenode](o1iunpbk9xf037){};
\draw [move] (o1iunpbk9xf037) to (1iunpbk9xf037);
\draw (1.0,13.0)node[blacknode](t4nv4ih5zfp0){}; 
\draw (0.800456,13.3121)node[whitenode](ot4nv4ih5zfp0){};
\draw [move] (ot4nv4ih5zfp0) to (t4nv4ih5zfp0);
\draw (1.0,12.0)node[blacknode](1jyouzk1q4ilx){}; 
\draw (0.773625,11.8902)node[whitenode](o1jyouzk1q4ilx){};
\draw [move] (o1jyouzk1q4ilx) to (1jyouzk1q4ilx);
\draw (11.0,12.0)node[blacknode](vbcea603iwkh){}; 
\draw (10.9944,12.1049)node[whitenode](ovbcea603iwkh){};
\draw [move] (ovbcea603iwkh) to (vbcea603iwkh);
\draw (12.0,1.0)node[blacknode](1jicsuj4ushgh){}; 
\draw (12.0138,0.542771)node[whitenode](o1jicsuj4ushgh){};
\draw [move] (o1jicsuj4ushgh) to (1jicsuj4ushgh);
\draw (13.0,1.0)node[blacknode](1hre8akymywc9){}; 
\draw (13.3283,0.6769)node[whitenode](o1hre8akymywc9){};
\draw [move] (o1hre8akymywc9) to (1hre8akymywc9);
\draw (12.0,14.0)node[blacknode](sl6q1mwsdank){}; 
\draw (12.3626,13.6876)node[whitenode](osl6q1mwsdank){};
\draw [move] (osl6q1mwsdank) to (sl6q1mwsdank);
\draw (1k22bg7a8fwvq) -- (qxljr4sz1s8h) -- (to2bvbaw7m1y); 
\draw (1hux59pt3o0v9) -- (1h7rftrmmu7ug) -- (skk2dr8dnui9) -- (skvl040elap2) -- (1k22bg7a8fwvq); 
\draw (rkwt4zmyj9ro) -- (vbf6dm41ya0g) -- (1hb7mzdrn5q3o) -- (1hux59pt3o0v9); 
\draw (1iesa7y1ysj76) -- (r1ifkmxd94j7) -- (rkwt4zmyj9ro); 
\draw (1iesa7y1ysj76) -- (rkwt4zmyj9ro) -- (1hb7mzdrn5q3o) -- (skk2dr8dnui9); 
\draw (1k22bg7a8fwvq) -- (1iesa7y1ysj76) -- (skvl040elap2); 
\draw (1hv0gx9pyo484) -- (1je9xyvzbka45) -- (qe9ch1uau2ch) -- (1hv0gx9pyo484); 
\draw (qhw4yxfb0ew5) -- (rl2vgftpqng4) -- (vbf6dm41ya0g) -- (qhw4yxfb0ew5) -- (u8b8kvq48tv6) -- (1iunpbk9xf037) -- (1jifjgjj8mhbn) -- (u8b8kvq48tv6) -- (vbf6dm41ya0g);
\draw (1jicsuj4ushgh) -- (vbcea603iwkh) -- (1jyouzk1q4ilx) -- (t4nv4ih5zfp0); 
\draw (sl6q1mwsdank) -- (1hre8akymywc9) -- (1jicsuj4ushgh); 
\draw (t4nv4ih5zfp0) -- (sl6q1mwsdank) -- (vbcea603iwkh); 
 \end{tikzreduction}
 }\\

Input && Output \\
\end{tabular}

\medskip

\centering
\begin{tabular}{l@{\quad}r@{\quad}r@{\quad}r@{\quad}r}
\toprule
      & \multicolumn{2}{c}{Full model}&\multicolumn{2}{c}{Row generation}\\
\midrule
Model size & rows & columns & rows & columns \\
\midrule
Without presolve & * & * & 24762 & 6482\\
With presolve & * & * & 12355 & 3146 \\
\toprule
      & first solution & optimal 	& \multicolumn{2}{r}{optimal}\\
\midrule
Computation time & \dag	& \dag	& \multicolumn{2}{r}{7.1 s}\\ 
\bottomrule
\end{tabular}
\caption{Graph 4: $|V| = 26, |E| = 34$}
\label{example:graph_4}
\end{figure}

\begin{figure}[pt]
\centering
\begin{tabular}{cm{.1\textwidth}c}
\begin{tikzreduction}{2}{2}
\draw (0.184991,0.859075)node[blacknode](us06zqsrcjmp){}; 
\draw (0.44967,0.589401)node[blacknode](1jed61jlernmq){}; 
\draw (0.79425,0.314734)node[blacknode](1hb7oj8coo4s7){}; 
\draw (0.160021,0.30974)node[blacknode](1iyhsi3fxa1ix){}; 
\draw (0.739319,0.834106)node[blacknode](1ibc0cfbkt0s4){}; 
\draw (0.175003,0.594395)node[blacknode](ubrf9el6e3c3){}; 
\draw (0.759294,0.599389)node[blacknode](1jetbja7pc4l0){}; 
\draw (1jed61jlernmq) -- (us06zqsrcjmp); 
\draw (1hb7oj8coo4s7) -- (1jed61jlernmq); 
\draw (1iyhsi3fxa1ix) -- (1hb7oj8coo4s7); 
\draw (1jed61jlernmq) -- (1iyhsi3fxa1ix); 
\draw (1ibc0cfbkt0s4) -- (1jed61jlernmq); 
\draw (us06zqsrcjmp) -- (ubrf9el6e3c3); 
\draw (1ibc0cfbkt0s4) -- (1jetbja7pc4l0); 
\end{tikzreduction}
 & &
 \begin{tikzreduction}{2}{2}
 \draw (0.0,1.0)node[blacknode](us06zqsrcjmp){}; 
\draw (0.184991,0.859075)node[whitenode](ous06zqsrcjmp){};
\draw [move] (ous06zqsrcjmp) to (us06zqsrcjmp);
\draw (1.0,1.0)node[blacknode](1jed61jlernmq){}; 
\draw (0.44967,0.589401)node[whitenode](o1jed61jlernmq){};
\draw [move] (o1jed61jlernmq) to (1jed61jlernmq);
\draw (1.0,0.0)node[blacknode](1hb7oj8coo4s7){}; 
\draw (0.79425,0.314734)node[whitenode](o1hb7oj8coo4s7){};
\draw [move] (o1hb7oj8coo4s7) to (1hb7oj8coo4s7);
\draw (0.0,0.0)node[blacknode](1iyhsi3fxa1ix){}; 
\draw (0.160021,0.30974)node[whitenode](o1iyhsi3fxa1ix){};
\draw [move] (o1iyhsi3fxa1ix) to (1iyhsi3fxa1ix);
\draw (1.0,2.0)node[blacknode](1ibc0cfbkt0s4){}; 
\draw (0.739319,0.834106)node[whitenode](o1ibc0cfbkt0s4){};
\draw [move] (o1ibc0cfbkt0s4) to (1ibc0cfbkt0s4);
\draw (0.0,2.0)node[blacknode](ubrf9el6e3c3){}; 
\draw (0.175003,0.594395)node[whitenode](oubrf9el6e3c3){};
\draw [move] (oubrf9el6e3c3) to (ubrf9el6e3c3);
\draw (2.0,1.0)node[blacknode](1jetbja7pc4l0){}; 
\draw (0.759294,0.599389)node[whitenode](o1jetbja7pc4l0){};
\draw [move] (o1jetbja7pc4l0) to (1jetbja7pc4l0);
\draw (1iyhsi3fxa1ix) -- (1hb7oj8coo4s7) -- (1jed61jlernmq) -- (us06zqsrcjmp);
\draw (1jetbja7pc4l0) -- (1ibc0cfbkt0s4) -- (1jed61jlernmq) -- (1iyhsi3fxa1ix); 
\draw (us06zqsrcjmp) -- (ubrf9el6e3c3); 
 \end{tikzreduction}\\

Input && Output \\
\end{tabular}

\medskip
\centering
\begin{tabular}{l@{\quad}r@{\quad}r@{\quad}r@{\quad}r}
\toprule
      & \multicolumn{2}{c}{Full model}&\multicolumn{2}{c}{Row generation}\\
\midrule
Model size & rows & columns & rows & columns \\
\midrule
Without presolve & 2603 & 916 & 2271 & 816 \\
With presolve & 2583 & 896 & 2245 & 682 \\
\toprule
      & first solution & optimal 	& \multicolumn{2}{r}{optimal}\\
\midrule
Computation time & 0.5 s		& 4.8 s	& \multicolumn{2}{r}{20.2 s} \\ 
\bottomrule
\end{tabular}
\caption{Graph 5: $|V| = 7, |E| = 7$}
\label{example:graph_5}
\end{figure}

\begin{figure}[pt]
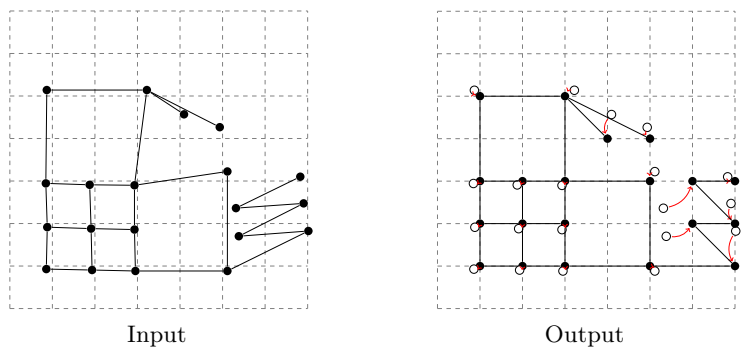

\centering
\begin{tabular}{cm{.1\textwidth}c}
 \resizebox{0.3\textwidth}{!}{
 \begin{tikzreduction}{5}{7}
\draw (0.274571,0.379452)node[blacknode](1h82ixih2ezxg){}; 
\draw (0.408984,3.75096)node[blacknode](sl8zi8wtlzcl){}; 
\draw (1.09224,0.513864)node[blacknode](qy4dwxsdqzw6){}; 
\draw (1.33867,2.17161)node[blacknode](1h7fv6yeqvinr){}; 
\draw (2.14514,0.480261)node[blacknode](1jyg1nqmbpa9c){}; 
\draw (2.34676,3.71735)node[blacknode](1jew4h5ffaxip){}; 
\draw (3.06362,0.782687)node[blacknode](qdw0b5k28svq){}; 
\draw (3.35485,1.70117)node[blacknode](s1jfnuf2jinp){}; 
\draw (4.16132,0.525065)node[blacknode](sobs7tfygmyg){}; 
\draw (4.41894,3.66134)node[blacknode](vbq8d9ci8yt0){}; 
\draw (3.62367,6.39438)node[blacknode](1iemqcrouh839){}; 
\draw (3.48926,2.48524)node[blacknode](t4kkndnckfn9){}; 
\draw (2.78359,6.28238)node[blacknode](vbt1bef3l7hu){}; 
\draw (2.29075,4.32221)node[blacknode](1iupz89ycbo85){}; 
\draw (1.61869,6.34956)node[blacknode](1k22acyw39f1e){}; 
\draw (1.57389,3.30291)node[blacknode](to2bsmodhik4){}; 
\draw (0.756212,6.39438)node[blacknode](1ibh3wvgmpik1){}; 
\draw (0.476189,4.50142)node[blacknode](vbknyqf399b9){}; 
\draw (0.308174,6.32719)node[blacknode](1ji7r5rnl23xg){}; 
\draw (sl8zi8wtlzcl) -- (1h82ixih2ezxg); 
\draw (qy4dwxsdqzw6) -- (sl8zi8wtlzcl); 
\draw (1h7fv6yeqvinr) -- (qy4dwxsdqzw6); 
\draw (1jyg1nqmbpa9c) -- (1h7fv6yeqvinr); 
\draw (1jew4h5ffaxip) -- (1jyg1nqmbpa9c); 
\draw (qdw0b5k28svq) -- (1jew4h5ffaxip); 
\draw (s1jfnuf2jinp) -- (qdw0b5k28svq); 
\draw (sobs7tfygmyg) -- (s1jfnuf2jinp); 
\draw (vbq8d9ci8yt0) -- (sobs7tfygmyg); 
\draw (1iemqcrouh839) -- (vbq8d9ci8yt0); 
\draw (t4kkndnckfn9) -- (1iemqcrouh839); 
\draw (vbt1bef3l7hu) -- (t4kkndnckfn9); 
\draw (1iupz89ycbo85) -- (vbt1bef3l7hu); 
\draw (1k22acyw39f1e) -- (1iupz89ycbo85); 
\draw (to2bsmodhik4) -- (1k22acyw39f1e); 
\draw (1ibh3wvgmpik1) -- (to2bsmodhik4); 
\draw (vbknyqf399b9) -- (1ibh3wvgmpik1); 
\draw (1ji7r5rnl23xg) -- (vbknyqf399b9); 
\end{tikzreduction}
 }
 &&
  \resizebox{0.3\textwidth}{!}{
 \begin{tikzreduction}{5}{7}
  \draw (0.0,0.0)node[blacknode](1h82ixih2ezxg){}; 
\draw (0.274571,0.379452)node[whitenode](o1h82ixih2ezxg){};
\draw [move] (o1h82ixih2ezxg) to (1h82ixih2ezxg);
\draw (0.0,4.0)node[blacknode](sl8zi8wtlzcl){}; 
\draw (0.408984,3.75096)node[whitenode](osl8zi8wtlzcl){};
\draw [move] (osl8zi8wtlzcl) to (sl8zi8wtlzcl);
\draw (1.0,1.0)node[blacknode](qy4dwxsdqzw6){}; 
\draw (1.09224,0.513864)node[whitenode](oqy4dwxsdqzw6){};
\draw [move] (oqy4dwxsdqzw6) to (qy4dwxsdqzw6);
\draw (1.0,2.0)node[blacknode](1h7fv6yeqvinr){}; 
\draw (1.33867,2.17161)node[whitenode](o1h7fv6yeqvinr){};
\draw [move] (o1h7fv6yeqvinr) to (1h7fv6yeqvinr);
\draw (2.0,0.0)node[blacknode](1jyg1nqmbpa9c){}; 
\draw (2.14514,0.480261)node[whitenode](o1jyg1nqmbpa9c){};
\draw [move] (o1jyg1nqmbpa9c) to (1jyg1nqmbpa9c);
\draw (2.0,4.0)node[blacknode](1jew4h5ffaxip){}; 
\draw (2.34676,3.71735)node[whitenode](o1jew4h5ffaxip){};
\draw [move] (o1jew4h5ffaxip) to (1jew4h5ffaxip);
\draw (3.0,1.0)node[blacknode](qdw0b5k28svq){}; 
\draw (3.06362,0.782687)node[whitenode](oqdw0b5k28svq){};
\draw [move] (oqdw0b5k28svq) to (qdw0b5k28svq);
\draw (3.0,2.0)node[blacknode](s1jfnuf2jinp){}; 
\draw (3.35485,1.70117)node[whitenode](os1jfnuf2jinp){};
\draw [move] (os1jfnuf2jinp) to (s1jfnuf2jinp);
\draw (4.0,1.0)node[blacknode](sobs7tfygmyg){}; 
\draw (4.16132,0.525065)node[whitenode](osobs7tfygmyg){};
\draw [move] (osobs7tfygmyg) to (sobs7tfygmyg);
\draw (4.0,4.0)node[blacknode](vbq8d9ci8yt0){}; 
\draw (4.41894,3.66134)node[whitenode](ovbq8d9ci8yt0){};
\draw [move] (ovbq8d9ci8yt0) to (vbq8d9ci8yt0);
\draw (4.0,6.0)node[blacknode](1iemqcrouh839){}; 
\draw (3.62367,6.39438)node[whitenode](o1iemqcrouh839){};
\draw [move] (o1iemqcrouh839) to (1iemqcrouh839);
\draw (3.0,3.0)node[blacknode](t4kkndnckfn9){}; 
\draw (3.48926,2.48524)node[whitenode](ot4kkndnckfn9){};
\draw [move] (ot4kkndnckfn9) to (t4kkndnckfn9);
\draw (3.0,6.0)node[blacknode](vbt1bef3l7hu){}; 
\draw (2.78359,6.28238)node[whitenode](ovbt1bef3l7hu){};
\draw [move] (ovbt1bef3l7hu) to (vbt1bef3l7hu);
\draw (2.0,5.0)node[blacknode](1iupz89ycbo85){}; 
\draw (2.29075,4.32221)node[whitenode](o1iupz89ycbo85){};
\draw [move] (o1iupz89ycbo85) to (1iupz89ycbo85);
\draw (2.0,6.0)node[blacknode](1k22acyw39f1e){}; 
\draw (1.61869,6.34956)node[whitenode](o1k22acyw39f1e){};
\draw [move] (o1k22acyw39f1e) to (1k22acyw39f1e);
\draw (1.0,3.0)node[blacknode](to2bsmodhik4){}; 
\draw (1.57389,3.30291)node[whitenode](oto2bsmodhik4){};
\draw [move] (oto2bsmodhik4) to (to2bsmodhik4);
\draw (1.0,6.0)node[blacknode](1ibh3wvgmpik1){}; 
\draw (0.756212,6.39438)node[whitenode](o1ibh3wvgmpik1){};
\draw [move] (o1ibh3wvgmpik1) to (1ibh3wvgmpik1);
\draw (0.0,5.0)node[blacknode](vbknyqf399b9){}; 
\draw (0.476189,4.50142)node[whitenode](ovbknyqf399b9){};
\draw [move] (ovbknyqf399b9) to (vbknyqf399b9);
\draw (0.0,6.0)node[blacknode](1ji7r5rnl23xg){}; 
\draw (0.308174,6.32719)node[whitenode](o1ji7r5rnl23xg){};
\draw [move] (o1ji7r5rnl23xg) to (1ji7r5rnl23xg);
\draw (sl8zi8wtlzcl) -- (1h82ixih2ezxg); 
\draw (qy4dwxsdqzw6) -- (sl8zi8wtlzcl); 
\draw (1h7fv6yeqvinr) -- (qy4dwxsdqzw6); 
\draw (1jyg1nqmbpa9c) -- (1h7fv6yeqvinr); 
\draw (1jew4h5ffaxip) -- (1jyg1nqmbpa9c); 
\draw (qdw0b5k28svq) -- (1jew4h5ffaxip); 
\draw (s1jfnuf2jinp) -- (qdw0b5k28svq); 
\draw (sobs7tfygmyg) -- (s1jfnuf2jinp); 
\draw (vbq8d9ci8yt0) -- (sobs7tfygmyg); 
\draw (1iemqcrouh839) -- (vbq8d9ci8yt0); 
\draw (t4kkndnckfn9) -- (1iemqcrouh839); 
\draw (vbt1bef3l7hu) -- (t4kkndnckfn9); 
\draw (1iupz89ycbo85) -- (vbt1bef3l7hu); 
\draw (1k22acyw39f1e) -- (1iupz89ycbo85); 
\draw (to2bsmodhik4) -- (1k22acyw39f1e); 
\draw (1ibh3wvgmpik1) -- (to2bsmodhik4); 
\draw (vbknyqf399b9) -- (1ibh3wvgmpik1); 
\draw (1ji7r5rnl23xg) -- (vbknyqf399b9); 
 \end{tikzreduction}
  }
  \\

Input && Output \\
\end{tabular}

\medskip

\centering
\begin{tabular}{l@{\quad}r@{\quad}r@{\quad}r@{\quad}r}
\toprule
      & \multicolumn{2}{c}{Full model}&\multicolumn{2}{c}{Row generation}\\
\midrule
Model size & rows & columns & rows & columns \\
\midrule
Without presolve & 135386 & 35649& 15741 & 4174 \\
With presolve & 74957 & 19591 & 9200 & 2402 \\
\toprule
      & first solution & optimal 	& \multicolumn{2}{r}{optimal}\\
\midrule
Computation time & 42.6 s	& 1105.6 s	& \multicolumn{2}{r}{21.2 s}\\ 
\bottomrule
\end{tabular}
\caption{Graph 6: $|V| = 19, |E| = 18$}
\label{example:graph_6}
\end{figure}

\begin{figure}[pt]
\centering
\begin{tabular}{cm{.1\textwidth}c}
\resizebox{0.35\textwidth}{!}{
\begin{tikzreduction}{7}{7}
\draw (0.857025,0.9283)node[blacknode](rl558lemo9d1){}; 
\draw (1.93232,0.9059)node[blacknode](trtkvv1gb86v){}; 
\draw (1.90992,1.88039)node[blacknode](qhsuvnscvof5){}; 
\draw (0.879425,1.91399)node[blacknode](t4vqsatoi4vn){}; 
\draw (0.845825,2.94448)node[blacknode](s12tusz400o2){}; 
\draw (1.87631,2.91088)node[blacknode](1iyl0bvgelqgk){}; 
\draw (2.95161,0.8835)node[blacknode](r1a3e67xpf78){}; 
\draw (2.92921,1.85798)node[blacknode](trwelsfm3v4x){}; 
\draw (2.92921,2.89967)node[blacknode](skk1ll1xohmg){}; 
\draw (5.11341,0.8835)node[blacknode](1h7j4ilkq90me){}; 
\draw (5.11341,3.22451)node[blacknode](tocwl3nmtrnb){}; 
\draw (0.868225,5.13987)node[blacknode](1jyg0k8dittno){}; 
\draw (3.22044,5.13987)node[blacknode](1ibkbnx6lzv3n){}; 
\draw (4.93419,4.2662)node[blacknode](toogu1nnhqp0){}; 
\draw (4.09411,4.56862)node[blacknode](1ibe9h8o76tyx){}; 
\draw (7.01756,1.82438)node[blacknode](vbccnkq63w37){}; 
\draw (5.38223,1.70117)node[blacknode](s15n4lyl1a2e){}; 
\draw (6.90556,2.47404)node[blacknode](1jydrf530zeqr){}; 
\draw (5.31503,2.36203)node[blacknode](1k1wqcmzmf4gz){}; 
\draw (6.82719,3.10129)node[blacknode](rhm4fkmua6nr){}; 
\draw (trtkvv1gb86v) -- (rl558lemo9d1); 
\draw (qhsuvnscvof5) -- (trtkvv1gb86v); 
\draw (t4vqsatoi4vn) -- (qhsuvnscvof5); 
\draw (rl558lemo9d1) -- (t4vqsatoi4vn); 
\draw (s12tusz400o2) -- (t4vqsatoi4vn); 
\draw (1iyl0bvgelqgk) -- (s12tusz400o2); 
\draw (qhsuvnscvof5) -- (1iyl0bvgelqgk); 
\draw (r1a3e67xpf78) -- (trtkvv1gb86v); 
\draw (trwelsfm3v4x) -- (r1a3e67xpf78); 
\draw (qhsuvnscvof5) -- (trwelsfm3v4x); 
\draw (skk1ll1xohmg) -- (1iyl0bvgelqgk); 
\draw (trwelsfm3v4x) -- (skk1ll1xohmg); 
\draw (1h7j4ilkq90me) -- (r1a3e67xpf78); 
\draw (tocwl3nmtrnb) -- (1h7j4ilkq90me); 
\draw (skk1ll1xohmg) -- (tocwl3nmtrnb); 
\draw (1jyg0k8dittno) -- (s12tusz400o2); 
\draw (1ibkbnx6lzv3n) -- (1jyg0k8dittno); 
\draw (skk1ll1xohmg) -- (1ibkbnx6lzv3n); 
\draw (toogu1nnhqp0) -- (1ibkbnx6lzv3n); 
\draw (1ibe9h8o76tyx) -- (1ibkbnx6lzv3n); 
\draw (vbccnkq63w37) -- (1h7j4ilkq90me); 
\draw (s15n4lyl1a2e) -- (vbccnkq63w37); 
\draw (1jydrf530zeqr) -- (s15n4lyl1a2e); 
\draw (1k1wqcmzmf4gz) -- (1jydrf530zeqr); 
\draw (rhm4fkmua6nr) -- (1k1wqcmzmf4gz); 
\end{tikzreduction}
}
 &&
 \resizebox{0.35\textwidth}{!}{
 \begin{tikzreduction}{7}{7}
  \draw (1.0,1.0)node[blacknode](rl558lemo9d1){}; 
\draw (0.857025,0.9283)node[whitenode](orl558lemo9d1){};
\draw [move] (orl558lemo9d1) to (rl558lemo9d1);
\draw (2.0,1.0)node[blacknode](trtkvv1gb86v){}; 
\draw (1.93232,0.9059)node[whitenode](otrtkvv1gb86v){};
\draw [move] (otrtkvv1gb86v) to (trtkvv1gb86v);
\draw (2.0,2.0)node[blacknode](qhsuvnscvof5){}; 
\draw (1.90992,1.88039)node[whitenode](oqhsuvnscvof5){};
\draw [move] (oqhsuvnscvof5) to (qhsuvnscvof5);
\draw (1.0,2.0)node[blacknode](t4vqsatoi4vn){}; 
\draw (0.879425,1.91399)node[whitenode](ot4vqsatoi4vn){};
\draw [move] (ot4vqsatoi4vn) to (t4vqsatoi4vn);
\draw (1.0,3.0)node[blacknode](s12tusz400o2){}; 
\draw (0.845825,2.94448)node[whitenode](os12tusz400o2){};
\draw [move] (os12tusz400o2) to (s12tusz400o2);
\draw (2.0,3.0)node[blacknode](1iyl0bvgelqgk){}; 
\draw (1.87631,2.91088)node[whitenode](o1iyl0bvgelqgk){};
\draw [move] (o1iyl0bvgelqgk) to (1iyl0bvgelqgk);
\draw (3.0,1.0)node[blacknode](r1a3e67xpf78){}; 
\draw (2.95161,0.8835)node[whitenode](or1a3e67xpf78){};
\draw [move] (or1a3e67xpf78) to (r1a3e67xpf78);
\draw (3.0,2.0)node[blacknode](trwelsfm3v4x){}; 
\draw (2.92921,1.85798)node[whitenode](otrwelsfm3v4x){};
\draw [move] (otrwelsfm3v4x) to (trwelsfm3v4x);
\draw (3.0,3.0)node[blacknode](skk1ll1xohmg){}; 
\draw (2.92921,2.89967)node[whitenode](oskk1ll1xohmg){};
\draw [move] (oskk1ll1xohmg) to (skk1ll1xohmg);
\draw (5.0,1.0)node[blacknode](1h7j4ilkq90me){}; 
\draw (5.11341,0.8835)node[whitenode](o1h7j4ilkq90me){};
\draw [move] (o1h7j4ilkq90me) to (1h7j4ilkq90me);
\draw (5.0,3.0)node[blacknode](tocwl3nmtrnb){}; 
\draw (5.11341,3.22451)node[whitenode](otocwl3nmtrnb){};
\draw [move] (otocwl3nmtrnb) to (tocwl3nmtrnb);
\draw (1.0,5.0)node[blacknode](1jyg0k8dittno){}; 
\draw (0.868225,5.13987)node[whitenode](o1jyg0k8dittno){};
\draw [move] (o1jyg0k8dittno) to (1jyg0k8dittno);
\draw (3.0,5.0)node[blacknode](1ibkbnx6lzv3n){}; 
\draw (3.22044,5.13987)node[whitenode](o1ibkbnx6lzv3n){};
\draw [move] (o1ibkbnx6lzv3n) to (1ibkbnx6lzv3n);
\draw (5.0,4.0)node[blacknode](toogu1nnhqp0){}; 
\draw (4.93419,4.2662)node[whitenode](otoogu1nnhqp0){};
\draw [move] (otoogu1nnhqp0) to (toogu1nnhqp0);
\draw (4.0,4.0)node[blacknode](1ibe9h8o76tyx){}; 
\draw (4.09411,4.56862)node[whitenode](o1ibe9h8o76tyx){};
\draw [move] (o1ibe9h8o76tyx) to (1ibe9h8o76tyx);
\draw (7.0,1.0)node[blacknode](vbccnkq63w37){}; 
\draw (7.01756,1.82438)node[whitenode](ovbccnkq63w37){};
\draw [move] (ovbccnkq63w37) to (vbccnkq63w37);
\draw (6.0,2.0)node[blacknode](s15n4lyl1a2e){}; 
\draw (5.38223,1.70117)node[whitenode](os15n4lyl1a2e){};
\draw [move] (os15n4lyl1a2e) to (s15n4lyl1a2e);
\draw (7.0,2.0)node[blacknode](1jydrf530zeqr){}; 
\draw (6.90556,2.47404)node[whitenode](o1jydrf530zeqr){};
\draw [move] (o1jydrf530zeqr) to (1jydrf530zeqr);
\draw (6.0,3.0)node[blacknode](1k1wqcmzmf4gz){}; 
\draw (5.31503,2.36203)node[whitenode](o1k1wqcmzmf4gz){};
\draw [move] (o1k1wqcmzmf4gz) to (1k1wqcmzmf4gz);
\draw (7.0,3.0)node[blacknode](rhm4fkmua6nr){}; 
\draw (6.82719,3.10129)node[whitenode](orhm4fkmua6nr){};
\draw [move] (orhm4fkmua6nr) to (rhm4fkmua6nr);
\draw (trtkvv1gb86v) -- (rl558lemo9d1); 
\draw (qhsuvnscvof5) -- (trtkvv1gb86v); 
\draw (t4vqsatoi4vn) -- (qhsuvnscvof5); 
\draw (rl558lemo9d1) -- (t4vqsatoi4vn); 
\draw (s12tusz400o2) -- (t4vqsatoi4vn); 
\draw (1iyl0bvgelqgk) -- (s12tusz400o2); 
\draw (qhsuvnscvof5) -- (1iyl0bvgelqgk); 
\draw (r1a3e67xpf78) -- (trtkvv1gb86v); 
\draw (trwelsfm3v4x) -- (r1a3e67xpf78); 
\draw (qhsuvnscvof5) -- (trwelsfm3v4x); 
\draw (skk1ll1xohmg) -- (1iyl0bvgelqgk); 
\draw (trwelsfm3v4x) -- (skk1ll1xohmg); 
\draw (1h7j4ilkq90me) -- (r1a3e67xpf78); 
\draw (tocwl3nmtrnb) -- (1h7j4ilkq90me); 
\draw (skk1ll1xohmg) -- (tocwl3nmtrnb); 
\draw (1jyg0k8dittno) -- (s12tusz400o2); 
\draw (1ibkbnx6lzv3n) -- (1jyg0k8dittno); 
\draw (skk1ll1xohmg) -- (1ibkbnx6lzv3n); 
\draw (toogu1nnhqp0) -- (1ibkbnx6lzv3n); 
\draw (1ibe9h8o76tyx) -- (1ibkbnx6lzv3n); 
\draw (vbccnkq63w37) -- (1h7j4ilkq90me); 
\draw (s15n4lyl1a2e) -- (vbccnkq63w37); 
\draw (1jydrf530zeqr) -- (s15n4lyl1a2e); 
\draw (1k1wqcmzmf4gz) -- (1jydrf530zeqr); 
\draw (rhm4fkmua6nr) -- (1k1wqcmzmf4gz); 
 \end{tikzreduction}
 }\\

Input && Output \\
\end{tabular}

\medskip

\centering
\begin{tabular}{l@{\quad}r@{\quad}r@{\quad}r@{\quad}r}
\toprule
      & \multicolumn{2}{c}{Full model}&\multicolumn{2}{c}{Row generation}\\
\midrule
Model size & rows & columns & rows & columns \\
\midrule
Without presolve & 323441 & 82816 & 15161 & 4044 \\
With presolve & 323441 & 82816 & 15127 & 3894 \\
\toprule
      & first solution & optimal 	& \multicolumn{2}{r}{optimal}\\
\midrule
Computation time & 182.1 s	& \dag  	& \multicolumn{2}{r}{211.6 s}\\ 
\bottomrule
\end{tabular}
\caption{Graph 7: $|V| = 20, |E| = 25$}
\label{example:graph_7}

\end{figure}

\else
\fi

\end{document}